\newcommand{\bit}{\begin{itemize}}
\newcommand{\eit}{\end{itemize}\par\noindent}
\newcommand{\ben}{\begin{enumerate}}
\newcommand{\een}{\end{enumerate}\par\noindent}
\newcommand{\beq}{\begin{equation}}
\newcommand{\eeq}{\end{equation}\par\noindent}
\newcommand{\beqa}{\begin{eqnarray*}}
\newcommand{\eeqa}{\end{eqnarray*}\par\noindent}
\newcommand{\beqn}{\begin{eqnarray}}
\newcommand{\eeqn}{\end{eqnarray}\par\noindent}
\def\HH{{\cal H}}
\newcommand{\GG}{\mathcal{G}}
\newcommand{\CC}{\mathcal{C}}
\newcommand{\II}{\mathrm{I}}
\newcommand{\MM}{\mathcal{M}}
\newcommand{\VV}{\mathcal{V}}
\newcommand{\dd}{\llcorner}
\newcommand{\sdot}{\bullet}
\newcommand{\ddd}{\lrcorner}
\newcommand{\uu}{\ulcorner}
\newcommand{\uuu}{\urcorner}
\newcommand{\Nat}{\mathbb{N}}
\newcommand{\llpar}{\bindnasrepma}
\newcommand{\Tr}{\mathsf{Tr}}
\newcommand{\Mon}{\mathsf{M}}
\newcommand{\SMon}{\mathsf{SM}}
\newcommand{\CCl}{\mathsf{CC}}
\newcommand{\SCC}{\mathsf{SCC}}
\newcommand{\FM}{F_{\Mon}(\CC )}
\newcommand{\FSM}{F_{\SMon}(\CC )}
\newcommand{\FTr}{F_{\Tr}(\CC )}
\newcommand{\FCC}{F_{\CCl}(\CC )}
\newcommand{\FSCC}{F_{\SCC}(\CC )}
\newcommand{\FV}{F_{\VV}(\CC )}
\newcommand{\Grph}{\mathbf{Graph}}
\newcommand{\Loop}[1]{\mathcal{L}[#1]}
\newcommand{\InvSet}{\mathbf{InvSet}}
\newcommand{\InvCat}{\mathbf{InvCat}}
\newcommand{\InvCMon}{\mathbf{InvCMon}}
\newcommand{\ie}{\textit{i.e.}~}
\newcommand{\Ppi}{P_{\pi}}
\newcommand{\Ppio}{P_{\pi}^0}
\newcommand{\ppi}{p_{\pi}}
\newcommand{\LL}{\mathcal{L}}
\newcommand{\Ob}[1]{\mathsf{Ob}\; #1}
\newcommand{\Mor}[1]{\mathsf{Mor}\, #1}
\newcommand{\munion}{\uplus}
\newcommand{\mlb}{\mathbf{\{} \!\! |}
\newcommand{\mrb}{| \!\! \mathbf{\}}}
\newcommand{\ptensor}{\otimes}
\newcommand{\sgn}{\mathsf{sgn}}
\newcommand{\isoarrow}{\stackrel{\cong}{\rightarrow}}
\newcommand{\id}{\mathsf{id}}
\newcommand{\dunion}{\sqcup}
\newcommand{\EX}[2]{\mathrm{EX}(#1 ,#2 )}
\title{Abstract Scalars, Loops, and Free Traced and Strongly Compact Closed Categories}
\author{Samson Abramsky}
\institute{Oxford University Computing Laboratory\\
Wolfson Building, Parks Road, Oxford OX1 3QD, U.K.\\
\texttt{http://web.comlab.ox.ac.uk/oucl/work/samson.abramsky/}}
\begin{document}

\maketitle

\begin{abstract}
We study structures which have arisen in recent work by the present author and Bob Coecke on a categorical axiomatics for Quantum Mechanics; in particular, the notion of \emph{strongly compact closed category}. We explain how these structures support a notion of \emph{scalar} which allows quantitative aspects of physical theory to be expressed, and how the notion of strong compact closure emerges as a significant refinement of the more classical notion of compact closed category.

We then proceed to an extended discussion of free constructions for a sequence of progressively more complex kinds of structured category, culminating in the strongly compact closed case. The simple geometric and combinatorial ideas underlying these constructions are emphasized. We also discuss variations where a prescribed monoid of scalars can be `glued in' to the free construction.
\end{abstract}

\section{Introduction}
In this preliminary section, we will discuss the background and motivation for the technical results in the main body of the paper, in a fairly wide-ranging fashion. The technical material itself should be essentially self-contained, from the level of a basic familiarity with monoidal categories (for which see e.g. \cite{Mac}).

\subsection{Background}
In recent work \cite{AC1,AC2}, the present author and Bob Coecke have developed a categorical axiomatics for Quantum Mechanics, as a foundation for high-level approaches to quantum informatics: type systems, logics, and languages for quantum programming and quantum protocol specification.
The central notion in our axiomatic framework is that of \emph{strongly compact closed category}. It turns out that this rather simple and elegant structure suffices to capture most of the key notions for quantum informatics: compound systems, unitary operations, projectors, preparations of entangled states, Dirac bra-ket notation, traces, scalars, the Born rule. This axiomatic framework admits a range of models, including of course the Hilbert space formulation of quantum mechanics.

Additional evidence for the scope of the framework is provided by recent work of Selinger \cite{Sel05}. He shows that the framework of completely positive maps acting on generalized states represented by density operators, used in his previous work on the semantics of quantum programming languages \cite{Sel04}, fits perfectly into the framework of strongly compact closed categories.\footnote{Selinger prefers to use the term `dagger compact closed category', since the notion of adjoint which is formalized by the dagger operation $()^{\dagger}$ is a separate structure which is meaningful in a more general setting.}
He also showed that a simple construction (independently found and studied in some depth by Coecke \cite{Coe05}), which can be carried out completely generally at the level of strongly compact closed categories, corresponds to passing to the category of completely positive maps (and specializes exactly to this in the case of Hilbert spaces).

\subsection{Multiplicatives and Additives}
We briefly mention a wider context for these ideas. To capture the branching structure of measurements, and the flow of (classical) information from the result of a measurement to the future evolution of the quantum system, an additional \emph{additive} level of structure is required, based on a functor $\oplus$, as well as the \emph{multiplicative} level of the compact closed structure based around the tensor product (monoidal structure) $\otimes$. This delineation of additive and multiplicative levels of Quantum Mechanics is one of the conceptually interesting outcomes of our categorical axiomatics. (The terminology is based on that of Linear Logic \cite{Gir89} --- of which our structures can be seen as `collapsed models'). In terms of ordinary algebra, the multiplicative level corresponds to the multilinear-algebraic aspect of Quantum Mechanics, and the additive level to the linear-algebraic. But this distinction is usually lost in the sea of matrices; in particular, it is a real surprise how much can be done purely with the multiplicative structure.

It should be mentioned that we fully expect an \emph{exponential} level to become important, in the passage to the multi-particle, infinite dimensional, relativistic, and eventually field-theoretic levels of quantum theory.

We shall not discuss the additive level further in this paper. For most purposes, the additive structure can be regarded as freely generated, subject to arithmetic requirements on the scalars (see \cite{AC1}).

\subsection{Explicit constructions of free structured categories}
Our main aim in the present paper is to give explicit characterizations of free constructions for various kinds of categories-with-structure, most notably, for traced symmetric monoidal and strongly compact closed categories. We aim to give a synthetic account, including some basic cases which are well known from the existing literature \cite{Mac,KL80}. We will progressively build up structure through the following levels:
\begin{center}
\begin{tabular}{ll}
(1) $\quad$ & Monoidal Categories \\
(2) & Symmetric Monoidal Categories \\
(3) & Traced Symmetric Monoidal Categories \\
(4) & Compact Closed Categories \\
(5) & Strongly Compact Closed Categories \\
(6) & Strongly Compact Closed Categories with prescribed scalars \\
\end{tabular}
\end{center}
Of these, those cases which have not, to the best of our knowledge,, appeared previously are (3), (5) and (6). But in any event, we hope that our account will serve as a clear, accessible and useful reference.

Our constructions also serve to \emph{factor} the Kelly-Laplaza construction \cite{KL80} of the free compact closed category through the $\GG$ or $\mathrm{Int}$ construction \cite{JSV,Abr96} of the compact closed category freely generated by a traced symmetric monoidal category, which is a central part of (the mathematically civilised version of) the so-called `Geometry of Interaction' \cite{Gir89,AHS}.

It should be emphasized that constructions (1)--(4) are free over \emph{categories}, (5) over categories with involutions, and (6) over a comma category of categories with involution with a specified evaluation of scalars. We note that Dusko Pavlovic has give a free construction of traced categories over \emph{monoidal categories} \cite{Pav}. His construction is elegant, but  abstract and less combinatorial/geometric than ours: perhaps necessarily so, since in our situation the monoidal structure, which itself has some spatial content, is added freely.
Another reference is by Katis, Sabadini and Walters \cite{KSW}. They construct a free `feedback category', which is a trace minus the Yanking axiom --- which is very important for the dynamics of the trace --- over a monoidal category, and then formally quotient it to get a traced category.
A treatment in the same style as the present paper of free traced, compact closed and strongly compact closed categories over a monoidal category remains a topic for future investigation.

Furthermore, we will work entirely with the \emph{strict} versions of the categories-with-structure we will study. Since in each case, every such category is monoidally equivalent to a strict one, this does not really lose any generality; while by greatly simplifying the description of the free constructions, it makes their essential content, especially the geometry that begins to emerge as we add traces and compact closure (paths and loops), much more apparent.

\subsection{Diagrammatics} 
Our free constructions have immediate diagrammatic interpretations, which make their geometric content quite clear and vivid. Diagrammatic notation for tensor categories has been extensively developed with a view to applications in categorical formulations of topological invariants, quantum groups,  and topological quantum field theories \cite{Kas}. Within the purely categorical literature, a forerunner of these developments is the early work of Kelly on coherence \cite{Kel1,Kel2}; while the are also several precursors in the non-categorical literature, notably Penrose's diagrammatic notation for abstract tensors \cite{Pen}.

Diagrammatic notation has played an important role in our own work with Coecke on applying our categorical axiomatics to quantum informatics, e.g. to quantum protocols \cite{AC1}. For example, the essence of the verification of the teleportation protocol is the  diagrammatic equality shown in Figure~1.
For details, see \cite{AC1,AC2}.
\begin{figure}
\caption[]{Diagrammatic proof of teleportation}
\begin{center}
\psset{unit=1in,cornersize=absolute}%
\begin{pspicture}(-0.070175,-1.824561)(3.929825,0.701754)
\psset{linewidth=2pt}%
\pscustom[fillcolor=lightgray,fillstyle=solid,linecolor=blue]{%
\psline(0,-0.421053)(0.842105,-0.421053)
(0.421053,0)
(0,-0.421053)
}%
\pscustom[fillcolor=yellow,fillstyle=solid,linecolor=blue]{%
\psline(0.561404,-0.982456)(1.403509,-0.982456)
(0.982456,-1.403509)
(0.561404,-0.982456)
}%
\psframe[fillstyle=solid,fillcolor=lightgray,linecolor=blue](-0.003884,-0.775814)(0.284586,-0.487344)
\rput(0.140351,-0.631579){{ $\beta_i$}}
\psset{linewidth=1pt}%
\psframe[linestyle=dashed](-0.07406,-0.845989)(0.916165,0.07406)
\psframe[fillstyle=solid,fillcolor=green,linecolor=blue](1.118923,0.136467)(1.407393,0.424937)
\rput(1.263158,0.280702){{ $\beta_{i}^{-1}$}}
\psset{linewidth=2pt}%
\psline(0.140351,-1.403509)(0.140351,-0.77193)
\psline(0.140351,-0.491228)(0.140351,-0.350877)
(0.701754,-0.350877)
(0.701754,-1.052632)
(1.263158,-1.052632)
(1.263158,0.140351)
\psline[arrowsize=0.1in 0,arrowlength=1.25,arrowinset=0]{->}(1.263158,0.421053)(1.263158,0.701754)
\rput(1.964912,-0.421053){{\Huge $=$}}
\psset{linewidth=1pt}%
\psframe[fillstyle=solid,fillcolor=lightgray,linecolor=blue](2.522432,-0.775814)(2.810902,-0.487344)
\rput(2.666667,-0.631579){{ $\beta_i$}}
\psframe[fillstyle=solid,fillcolor=green,linecolor=blue](2.522432,0.136467)(2.810902,0.424937)
\rput(2.666667,0.280702){{ $\beta_{i}^{-1}$}}
\psset{linewidth=2pt}%
\psline(2.666667,-1.403509)(2.666667,-0.77193)
\psline(2.666667,-0.491228)(2.666667,0.140351)
\psline[arrowsize=0.1in 0,arrowlength=1.25,arrowinset=0]{->}(2.666667,0.421053)(2.666667,0.701754)
\rput(3.368421,-0.421053){{\Huge $=$}}
\psline[arrowsize=0.1in 0,arrowlength=1.25,arrowinset=0]{->}(3.929825,-1.403509)(3.929825,0.701754)
\end{pspicture}%
\end{center}
\end{figure}
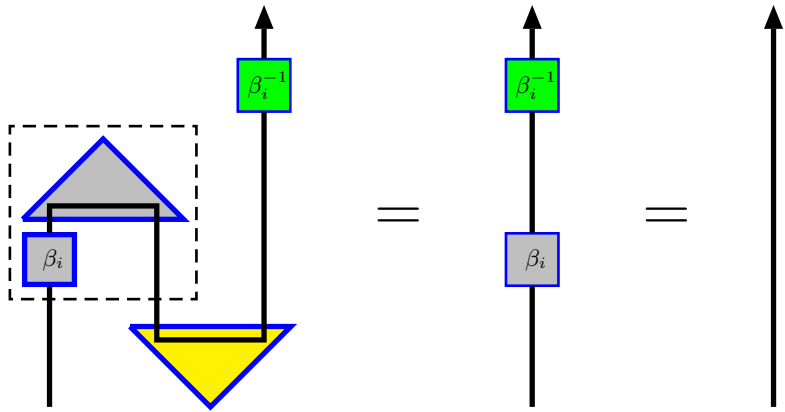

\subsection{Categorical Quantum Logic}
The diagrammatics of our constructions leads in turn to the idea of a \emph{logical} formulation, in which the diagrammatic representation of a morphism in the free category is thought of as a \emph{proof-net}, in the same general sense as in Linear Logic \cite{Gir87}.

More precisely, morphisms in the free category will correspond to proof nets in normal form, and the definition of composition in the category gives a direct construction for  normalizing a cut between two such proof nets. One advantage of the logical formulation is that we get an explicit syntactic description of these objects, and we can decompose the normalization process into cut-reduction steps, so that the computation of the normal form can be captured by a rewriting system.
This provides an explicit computational basis for deciding equality of proofs, which corresponds in the categorical context to verifying the \emph{commutativity of a diagram}.

In the categorical approach to quantum informatics \cite{AC1}, verifying the correctness of various quantum protocols is formulated as showing the commutativity of certain diagrams; so a computational theory of the above kind is directly applicable to such verifications.

In a joint paper with Ross Duncan \cite{AD05}, we have developed a system of Categorical Quantum Logic along these lines, incorporating additive as well as multiplicative features. This kind of logic, and its connection with Quantum Mechanics, is very different to the traditional notion of `Quantum Logic' \cite{BvN}.
Duncan is continuing to develop this approach in his forthcoming thesis.

\subsection{Overview}
The further structure of the paper is as follows. In Section~2 we explore the abstract notion of scalar which exists in any monoidal category. As we will see, scalars play an important role in determining the structure of free traced and strongly compact closed categories, as they correspond to the values of \emph{loops}. In Section~3, we review the notions of compact closed and strongly compact closed categories. The need for the notion of strong compact closure, to capture the structure of the complex spaces arising in Quantum Mechanics, is explained. In Section~4, we turn to the free constructions themselves.

\paragraph{Notation}
We set up some notation which will be useful.
We define $[n] := \{ 1, \ldots , n \}$ for $n \in \Nat$.
We write $S(n)$ for the symmetric group on $[n]$.
If $\pi \in S(n)$ and $\sigma \in S(m)$, we define
$\pi \ptensor \sigma \in S(n+m)$ by
\[ 
\pi \ptensor \sigma (i)  = \begin{cases}\pi (i), & 1 \leq i \leq n \\
\sigma (i-n) + n, & n+1 \leq i \leq n+m .
\end{cases}
\]
Given $\lambda : [n] \rightarrow X$, $\mu : [m] \rightarrow X$, we define $[\lambda , \mu ] : [n+m] \rightarrow X$ by
\[ [\lambda , \mu ] (i)  = \begin{cases} \lambda (i), & 1 \leq i \leq n \\
\mu (i-n), & n+1 \leq i \leq n+m .
\end{cases}
\]
We write $\MM (X)$ for the free commutative monoid generated by a set $X$. Concretely, these are the finite multisets over $X$, with the addition given by multiset union, which we write as $S \munion T$. 

\section{Scalars in monoidal categories}
The concept of a \emph{scalar} as a basis for quantitative measurements is fundamental in Physics. In particular, in Quantum Mechanics complex numbers $\alpha$ play the role of \emph{probability amplitudes}, with corresponding probabilities $\alpha \bar{\alpha} = | \alpha |^2$.

A key step in the development of the categorical axiomatics for Quantum Mechanics in \cite{AC1} was the recognition that the notion of scalar is meaningful in great generality --- in fact, in any monoidal (not necessarily symmetric) category.\footnote{Susbsequently, I became aware through Martin Hyland of the mathematical literature on Tannakian categories \cite{SR,Del}, stemming ultimately from Grothendiek. Tannakian categories embody much stronger assumptions than ours, in particular that the categories are abelian as well as compact closed, although the idea of strong compact closure is absent. But they certainly exhibit a consonant development of a large part of multilinear algebra in an abstract setting.}

Let $(\CC , \otimes , \II )$ be a strict monoidal category .
We define a \emph{scalar} in $\CC$ to be a morphism $s : \II \rightarrow \II$, \ie an endomorphism of the tensor unit.

\begin{example}
In $\mathbf{FdVec}_{\mathbb{K}}$, the category of finite-dimensional vector spaces over a field $\mathbb{K}$, linear maps $\mathbb{K} \to \mathbb{K}$ are uniquely
determined by the image of $1$, and hence correspond biuniquely to
elements of $\mathbb{K}\,$; composition corresponds to multiplication of 
scalars. In $\mathbf{Rel}$, there are just two scalars, corresponding
to the Boolean values $0$, $1$.
\end{example}

\noindent The (multiplicative) monoid of scalars is then just the endomorphism monoid $\CC (\II , \II )$.
The first key point is the elementary but beautiful observation by Kelly and Laplaza \cite{KL80} that this monoid is always commutative.
\begin{lemma}
\label{scprop}
$\CC (\II , \II )$ is a commutative monoid
\end{lemma}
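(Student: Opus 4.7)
The plan is to exploit the bifunctoriality (interchange law) of the tensor product, which in the presence of the strict unit laws $s \otimes \id_\II = s = \id_\II \otimes s$ forces the two compositions $s \circ t$ and $t \circ s$ to coincide for scalars $s,t : \II \to \II$.

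Concretely, I would first observe that because $\CC$ is strict monoidal we have $\II \otimes \II = \II$ on objects, and the unit coherence isomorphisms are identities, so for any scalar $s : \II \to \II$ both $s \otimes \id_\II$ and $\id_\II \otimes s$ are literally equal to $s$ as morphisms $\II \to \II$. Next, applying the bifunctoriality of $\otimes$ (i.e.\ the identity $(f' \otimes g') \circ (f \otimes g) = (f' \circ f) \otimes (g' \circ g)$ whenever the composites make sense) to the pairs $(s, \id_\II)$ and $(\id_\II, t)$ in one order, and to $(\id_\II, t)$ and $(s, \id_\II)$ in the other, I obtain
\[
s \circ t \;=\; (s \otimes \id_\II) \circ (\id_\II \otimes t) \;=\; (s \circ \id_\II) \otimes (\id_\II \circ t) \;=\; s \otimes t,
\]
and symmetrically
\[
t \circ s \;=\; (\id_\II \otimes t) \circ (s \otimes \id_\II) \;=\; (\id_\II \circ s) \otimes (t \circ \id_\II) \;=\; s \otimes t.
\]
Comparing, $s \circ t = t \circ s$, which is the commutativity claim; associativity and unit come for free from $\CC(\II,\II)$ being an endomorphism monoid.

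There isn't really a hard step; the only subtle point is conceptual rather than technical, namely recognising that the two ``shuffled'' expressions $(s \otimes \id) \circ (\id \otimes t)$ and $(\id \otimes t) \circ (s \otimes \id)$ both collapse to $s \otimes t$ via the interchange law. In the non-strict case one would have to insert the unit isomorphisms $\lambda_\II = \rho_\II : \II \otimes \II \to \II$ and invoke the coherence axiom $\lambda_\II = \rho_\II$ (Kelly's observation), but strictness makes this invisible, so the argument reduces to a one-line application of bifunctoriality.
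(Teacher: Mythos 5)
Your proof is correct and is essentially the same argument the paper gives: the paper's commuting diagram encodes exactly your interchange-law computation, showing $s \circ t = s \otimes t = t \circ s$ via $s \otimes 1_\II = s$ and $1_\II \otimes t = t$ under strictness. Even your closing remark about needing $\lambda_\II = \rho_\II$ in the non-strict case matches the paper's own comment following the proof.
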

\begin{proof}
\begin{diagram}
\II & \rEq & \II \otimes \II &  \rEq & \II \otimes \II & \rEq & \II \\
\uTo^{s} & & \uTo^{s \otimes 1} & & \dTo_{1 \otimes t} & & \dTo_{t} \\
\II & \rEq & \II \otimes \II & \rTo^{s \otimes t} & \II \otimes \II & \rEq & \II \\
\dTo^{t} & & \dTo^{1 \otimes t} & & \uTo_{s \otimes 1} & & \uTo_{s} \\
\II & \rEq & \II \otimes \II &  \rEq & \II \otimes \II & \rEq & \II \\
\end{diagram}
\end{proof}

\noindent We remark that in the non-strict case, where we have unit isomorphisms 
\[ \lambda_A : \II \otimes A \rightarrow A \qquad
\rho_A : A \otimes \II \rightarrow A \]
the proof makes essential use of the coherence axiom $\lambda_{\II} = \rho_{\II}$.

The second point is that a good notion of \emph{scalar multiplication} exists at this level of generality.
That is, each scalar
$s:\II\to\II$ induces a natural transformation
\[ \begin{diagram}
s_A :A & \rTo^{\simeq} & \II \otimes \!A & \rTo^{s \otimes 1_A} & \II
\otimes\! A & \rTo^{\!\!\simeq\ } & A\,.\ \ \ \ \
\end{diagram}
\]
with the naturality square
\[ \begin{diagram}
A & \rTo^{s_A} & A \\
\dTo^{f} & & \dTo_{f} \\
B & \rTo_{s_B} & B \\
\end{diagram}
\]
We write $s \sdot f$ for $f \circ s_A=s_B\circ f$.
Note that
\begin{align}
1 \sdot f & = f \label{sdotident} \\
s \sdot (t \sdot f) & =  (s \circ t) \sdot f \label{sdotact}\\
(s \sdot g)\circ(t \sdot f) & =  (s\circ t)\sdot(g\circ f) \label{sdotcomp}\\
(s \sdot f) \otimes (t \sdot g) & =  (s \circ t) \sdot (f \otimes g) \label{sdotten}
\end{align}
which exactly generalizes the multiplicative part of the usual properties of scalar multiplication.
Thus scalars act globally on the whole category.

\section{Strongly Compact Closed Categories}
A \em compact closed category \em is a symmetric monoidal
category in which to each object $A$ a \em dual \em
$A^*$, a \em unit \em $\eta_A:{\rm I}\to A^*\otimes A$ and a \em  
counit \em $\epsilon_A:A\otimes A^*\to {\rm I}$ are assigned in
such a way that the following `triangular identities' hold:
\[ 
\label{ccc1}
\begin{diagram}  
A &&\rTo^{1_A\otimes\eta_A} && A\otimes A^* \otimes
A && \rTo^{\epsilon_A\otimes
1_A}&& A\\ 
\end{diagram}
\quad = \quad 1_A 
\]
\[ 
\label{ccc2}
\begin{diagram}  
A^* &&\rTo^{\eta_A \otimes 1_A}&& A^* \otimes A \otimes
A^* && \rTo^{1_A \otimes \epsilon_A} && A^* \\ 
\end{diagram}
\quad = \quad 1_{A^*}
\]
Viewing monoidal categories as bicategories with a single 0-cell, this amounts to the axiom: 
\begin{center}
\fbox{Every object (1-cell) has an adjoint}
\end{center}
We can also view compact closed categories as *-autonomous categories \cite{Barr} for which ${\otimes} = {\llpar}$, and hence as `collapsed' models of Linear Logic \cite{Gir87}.

\subsection{Examples}

\begin{itemize}
\item $({\bf Rel},\times)$: Sets, relations,
and cartesian product.
Here $\eta_X\subseteq\{*\}\times(X\times X)$ and we have
\[
\eta_X=\epsilon_X^{c}=\{(*,(x,x))\mid x\in X\}\,.
\] 
\item $({\bf FdVec}_\mathbb{K},\otimes)$: Vector spaces over a field $\mathbb{K}$, linear maps,
and tensor product.
The unit and counit in $({\bf FdVec}_\mathbb{C},\otimes)$ are
\beqa 
&&\hspace{-6mm}\eta_V:\mathbb{C}\to V^*\otimes V::1\mapsto\sum_{i=1}^{i=n}\bar{e}_i\otimes e_i\\
&&\hspace{-6mm}\epsilon_V:V\otimes
V^*\to\mathbb{C}::e_j\otimes\bar{e}_i\mapsto\langle \bar{e}_{i} \mid e_{j}\rangle
\eeqa 
where $n$ is the dimension of $V$, $\{e_i\}_{i=1}^{i=n}$ is a basis for
$V$ and $\bar{e}_i$ is the linear functional in $V^*$ determined by
$\bar{e}_{j}( e_{i}) = \delta_{ij}$.
\end{itemize}

\subsection{Duality, Names and Conames}

For each morphism $f:A\to B$ in a compact closed category we can construct
a \em
dual
\em
$f^* : B^* \rightarrow A^*$:
\[ f^* \quad = \quad
\begin{diagram}
B^* & \rTo^{\eta_A \otimes 1} & A^* \otimes A \otimes B^* & \rTo^{1 \otimes f \otimes 1} &   A^* \otimes B \otimes B^* & \rTo^{1 \otimes \epsilon_B} & A^* \\
\end{diagram}
\]
a
\em name
\em 
\[ \uu f \uuu : \II \rightarrow A^* \otimes B \quad = \quad
\begin{diagram}
\II & \rTo^{\eta} & A^* \otimes A & \rTo^{1 \otimes f} & A^* \otimes B \\
\end{diagram}
\]
and a \em coname \em 
\[ \dd f \ddd : A \otimes B^* \rightarrow \II \quad = \quad
\begin{diagram}
A \otimes B^* & \rTo^{f \otimes 1} & B \otimes B^* & \rTo^{\epsilon} & \II \\
\end{diagram}
\]
The assignment $f\mapsto f^*$ extends $A\mapsto A^*$ into a
contravariant endofunctor with $A\simeq A^{**}$. In any
compact closed category, we have
\[ \CC (A \otimes B^\ast , \II ) \simeq \CC (A, B) \simeq \CC (\II ,
A^\ast \otimes B).
\]
For $R\in{\bf Rel}(X,Y)$ we have
\beqa
&&\hspace{-6mm}\uu R\uuu=\{(*,(x,y))\mid xRy,x\in X, y\in Y\}\\
&&\hspace{-6mm}\dd R\ddd=\{((x, y),*)\mid xRy,x\in X, y\in Y\}
\eeqa
and for $f\in {\bf FdVec}_\mathbb{K}(V,W)$ with matrix $(m_{ij})$ in bases
$\{e_i^V\}_{i=1}^{i=n}$ and $\{e_j^W\}_{j=1}^{j=m}$ of $V$ and $W$
respectively:
\beqa
&&\hspace{-6mm}\uu f\uuu:\mathbb{K}\to V^*\otimes
W::1\mapsto\sum_{i,j=1}^{\!i,j=n,m\!}m_{ij}\cdot
\bar{e}_i^V\otimes e_j^W\\ &&\hspace{-6mm}\dd f\ddd:V\otimes
W^*\to\mathbb{K}::e_i^V\otimes\bar{e}_j^W\mapsto m_{ij} .
\eeqa

\subsection{Why compact closure does not suffice}
In inner-product spaces we have the \emph{adjoint}:
\[ 
\begin{diagram}[loose,height=.8em,width=0pt]
& A && \rTo^{f} && B \\
\ & && \hLine && & \\
& A && \lTo^{f^{\dagger}} && B \\
\end{diagram} 
\qquad \qquad \langle f \phi \mid \psi \rangle_B = \langle \phi \mid f^{\dagger} \psi \rangle_A \]
This is \emph{not} the same as the dual --- the types are different.
In ``degenerate'' CCC's in which $A^* = A$, e.g. $\mathbf{Rel}$ or real inner-product spaces, we have $f^* = f^\dagger$.
In complex inner-product spaces such as Hilbert spaces, the inner product is \emph{sesquilinear}
\[ \langle \psi \mid \phi \rangle = \overline{\langle \phi \mid \psi \rangle} \]
and the isomorphism $A \simeq A^*$ is not linear, but \emph{conjugate linear}:
\[ \langle \lambda \sdot \phi \mid {-} \rangle \;\; = \;\; \bar{\lambda} \sdot \langle \phi \mid {-} \rangle 
\]
and hence does not live in the category $\mathbf{Hilb}$ at all!

\subsection{Solution: Strong Compact Closure}

We define the \emph{conjugate space} of a Hilbert space $\HH$: this has the same additive group of vectors as $\HH$, while the scalar multiplication and inner product are ``twisted'' by complex conjugation:
\[ \alpha \sdot_{\bar{\HH}} \phi := \bar{\alpha} \sdot_{\HH} \phi \qquad 
\langle \phi \mid \psi \rangle_{\bar{\HH}} := \langle \psi \mid \phi
\rangle_{\HH} \]
We can define $\HH^* = \bar{\HH}$, since $\HH$, $\bar{\HH}$ have the same orthornormal bases, and we can define the counit by
\[ \epsilon_{\HH} : \HH \otimes \bar{\HH} \rightarrow \mathbb{C} :: \phi \otimes \psi\mapsto \langle \psi \mid \phi\rangle_{\HH}
\]
which is indeed (bi)linear rather than sesquilinear!

The crucial observation is this: $()^*$ has a \emph{covariant} functorial extension $f \mapsto f_*$, which is essentially identity on morphisms; and then we can \emph{define}
\[ f^{\dagger} = (f^*)_* = (f_*)^* . \]

\subsection{Axiomatization of Strong Compact Closure}

In fact, there is a more concise and elegant axiomatization of strongly compact closed categories, which takes the adjoint as primitive \cite{AC2}.
It suffices to require the following structure on a (strict) symmetric monoidal category $(\CC , \otimes , \II , \tau )$:
\begin{itemize}
\item A strict monoidal involutive assignment
$A\mapsto A^*$ on objects.
\item An identity-on-objects, contravariant, strict monoidal, involutive functor
$f\mapsto f^\dagger$. 
\item For each object $A$ a unit
$\eta_A:\II\to A^*\otimes A$ with $\eta_{A^*}=\tau_{A^*\!,A}\circ\eta_A$
and such that either the diagram
\beq\label{sccc1}
\begin{diagram}
A&\rEq &A\otimes{\rm
I}&\rTo^{1_A\otimes\eta_A}&A\otimes(A^*\otimes
A)\\
\dTo^{1_A}&&&&\dEq \\
A&\lEq &{\rm I}\otimes
A&\lTo_{(\eta_A^\dagger\circ\tau_{A,A^*})\otimes 1_A}&(A\otimes
A^*)\otimes A
\end{diagram}
\eeq
or the diagram
\beq\label{sccc2}
\begin{diagram}
A&\rEq &{\rm I}\otimes
A&\rTo^{\eta_A\otimes 1_A}&(A^*\!\otimes A)\otimes
A&\rEq&A^*\!\otimes(A\otimes
A)\\
\dTo^{1_A}&&&&&&\dTo~{1_{A^*}\!\otimes\tau_{A,A}\!\!\!}\\
A&\lEq &{\rm I}\otimes
A&\lTo_{\eta_A^\dagger\otimes 1_A}&(A^*\!\otimes A)\otimes
A&\lEq&A^*\!\otimes(A\otimes A)
\end{diagram}
\eeq
commutes, where $\tau_{A,A}:A\otimes A\simeq A\otimes A$ is the twist
map.
\item Given such a functor $()^{\dagger}$, we  define an
isomorphism $\alpha$ to be \emph{unitary} if $\alpha^{-1} =
\alpha^{\dagger}$. We additionally require that the canonical natural isomorphism
for symmetry given as part of the symmetric
monoidal structure on $\CC$ is (componentwise) unitary in this sense.
\end{itemize}

While diagram (\ref{sccc1}) is the analogue to (\ref{ccc1}) with
$\eta_A^\dagger\circ\tau_{A,A^*}$ playing the role of the counit,
diagram
(\ref{sccc2}) expresses \em Yanking \em with respect to the canonical
trace of
the compact closed structure.\footnote{In fact,  we have used the `left trace' here rather than the more customary `right trace' which we shall use in our subsequent discussion of traced monoidal categories. In the symmetric context, the two are equivalent; we chose the left trace here because, given our other notational conventions, it requires less use of symmetries in stating the axiom.}   We only need one
commuting diagram as compared to (\ref{ccc1}) and (\ref{ccc2}) in
the
definition of compact closure,
since due
to the strictness assumption (i.e.~$A\mapsto A^*$ being involutive) we
were able
to replace the second diagram by
$\eta_{A^*}=\tau_{A^*\!,A}\circ\eta_A$.

\newpage
\subsubsection{Standard triangular identities diagrammatically}

\begin{center}
\psset{unit=1in,cornersize=absolute}%
\begin{pspicture}(0,-1.329545)(4.5,0.306818)
\psset{linewidth=1pt}%
\pscustom[fillcolor=lightgray,fillstyle=solid,linecolor=blue]{%
\psline(0,-0.306818)(0.613636,-0.306818)
(0.306818,0)
(0,-0.306818)
}%
\pscustom[fillcolor=yellow,fillstyle=solid,linecolor=blue]{%
\psline(0.409091,-0.715909)(1.022727,-0.715909)
(0.715909,-1.022727)
(0.409091,-0.715909)
}%
\psset{linewidth=2pt}%
\psline[arrowsize=0.1in 0,arrowlength=1.25,arrowinset=0]{->}(0.102273,-1.022727)(0.102273,-0.255682)
(0.511364,-0.255682)
(0.511364,-0.767045)
(0.920455,-0.767045)
(0.920455,0)
\rput(1.431818,-0.306818){{\Huge $=$}}
\psset{linewidth=2pt}%
\psline[arrowsize=0.1in 0,arrowlength=1.25,arrowinset=0]{->}(1.840909,-1.022727)(1.840909,0)
\psset{linewidth=1pt}%
\pscustom[fillcolor=lightgray,fillstyle=solid,linecolor=blue]{%
\psline(3.068182,-0.306818)(3.681818,-0.306818)
(3.375,0)
(3.068182,-0.306818)
}%
\pscustom[fillcolor=yellow,fillstyle=solid,linecolor=blue]{%
\psline(2.659091,-0.715909)(3.272727,-0.715909)
(2.965909,-1.022727)
(2.659091,-0.715909)
}%
\psset{linewidth=2pt}%
\psline[arrowsize=0.1in 0,arrowlength=1.25,arrowinset=0]{->}(2.761364,0)(2.761364,-0.767045)
(3.170455,-0.767045)
(3.170455,-0.255682)
(3.579545,-0.255682)
(3.579545,-1.022727)
\rput(4.090909,-0.306818){{\Huge $=$}}
\psset{linewidth=2pt}%
\psline[arrowsize=0.1in 0,arrowlength=1.25,arrowinset=0]{->}(4.5,0)(4.5,-1.022727)
\end{pspicture}%
\end{center}

\[ (\epsilon_A \otimes 1_A ) \circ (1_A \otimes \eta_A ) = 1_A \qquad \qquad \qquad  \qquad
(1_{A^*} \otimes \epsilon_A ) \circ (\eta_A \otimes 1_{A^*} ) = 1_{A^*}
\]

\subsubsection{Yanking diagrammatically}
\begin{center}
\psset{unit=1in,cornersize=absolute}%
\begin{pspicture}(0,-1.181818)(2,0.272727)
\psset{linewidth=1pt}%
\pscustom[fillcolor=lightgray,fillstyle=solid,linecolor=blue]{%
\psline(0,-0.272727)(0.545455,-0.272727)
(0.272727,0)
(0,-0.272727)
}%
\pscustom[fillcolor=yellow,fillstyle=solid,linecolor=blue]{%
\psline(0,-0.636364)(0.545455,-0.636364)
(0.272727,-0.909091)
(0,-0.636364)
}%
\psset{linewidth=2pt}%
\psline[arrowsize=0.1in 0,arrowlength=1.25,arrowinset=0]{->}(0.727273,-1.181818)(0.727273,-0.636364)
(0.454545,-0.227273)
(0.090909,-0.227273)
(0.090909,-0.681818)
(0.454545,-0.681818)
(0.727273,-0.272727)
(0.727273,0.272727)
\rput(1.272727,-0.272727){{\Huge $=$}}
\psset{linewidth=2pt}%
\psline[arrowsize=0.1in 0,arrowlength=1.25,arrowinset=0]{->}(2,-1.181818)(2,0.272727)
\end{pspicture}%
\end{center}
\[ (\eta_A^{\dagger} \otimes 1_{A}) \circ (1_{A^*} \otimes \tau_{A, A}) \circ (\eta_A \otimes 1_{A}) = 1_A .
\]

\section{Free Constructions}
We will now give detailed descriptions of free constructions for a number of types of category-with-structure.
We shall consider the following cases:
\begin{center}
\begin{tabular}{ll}
(1) $\quad$ & Monoidal Categories \\
(2) & Symmetric Monoidal Categories \\
(3) & Traced Symmetric Monoidal Categories \\
(4) & Compact Closed Categories \\
(5) & Strongly Compact Closed Categories \\
(6) & Strongly Compact Closed Categories with prescribed scalars \\
\end{tabular}
\end{center}
For cases (1)--(4), we shall consider adjunctions of the form
\[ \begin{diagram}
\mathbf{Cat} && \pile{\rTo^{F_S}\\ \bot \\ \lTo_{U_S}} && \mathbf{S{-}Cat} \\
\end{diagram}
\]
where $S$ ranges over the various kinds of structure.
Specifically, we shall give explicit descriptions in each case of $F_S (\CC )$ for a category $\CC$. This explicit description --- not algebraically by generators and relations, but giving direct combinatorial definitions of the normal forms and how they compose, thus solving the \textbf{word problem} over these categories --- is the strongest form of coherence theorem available for notions such as compact closure and traces. In these cases, cyclic structures arise, violating the compatibility requirements for stronger forms of coherence developed in \cite{Kel1,Kel2}. This point is discussed in the concluding section of \cite{KL80}.

In case (5), we consider an adjunction
\[ \begin{diagram}
\mathbf{InvCat} && \pile{\rTo^{F_{\SCC}}\\ \bot \\ \lTo_{U_{\SCC}}} && \mathbf{SCC{-}Cat} \\
\end{diagram}
\]
where $\mathbf{InvCat}$ is the category of categories with a specified involution, (what Selinger calls `dagger categories' in \cite{Sel05}), and functors which preserve the involution.
Finally, in (6) we consider an adjunction with respect to a comma category, which allows us to describe the free strongly compact slosed category generated by a category $\CC$, together with a prescribed multiplicative monoid of scalars.

Our treatment will be incremental, reflecting the fact that in our sequence (1)--(6), each term arises by adding structure to the previous one.
Each form of structure is reflected conceptually by a new feature arising in the corresponding free construction:
\begin{center}
\begin{tabular}{lll}
$\mathsf{M} \qquad$ & monoidal & lists \\
$\mathsf{SM}$ & symmetric monoidal & permutations \\
$\mathsf{Tr}$ & traced symmetric monoidal $\quad$ & loops \\
$\mathsf{CC}$ & compact closed & polarities \\
$\mathsf{SCC}$ & strong compact closed & reversals \\
\end{tabular}
\end{center}

\noindent We will also begin to see a primitive graph-theoretic geometry of \emph{points}, \emph{lines} and \emph{paths} begin to  emerge as we progress through the levels of structure. There is in fact more substantial geometry lurking here than might be apparent: the elaboration of these connections must be left to future work.

Finally, we mention a recurring theme. To form a `pure' picture of each construction, it is useful to consider the case
$F_S (\mathbf{1})$ explicitly, where $\mathbf{1}$ is the category with (one object and) one morphism (\ie one generator, no relations).

\subsection{Monoidal Categories}
We begin with the simple case of monoidal categories.
The objects of $F_{\Mon} (\CC )$, the free monoidal category generated by the category $\CC$, are  lists of objects of $\CC$.
The (strict) monoidal structure is given by concatenation; the tensor unit $\II$ is the empty sequence.

\noindent Arrows:
\[ \begin{diagram}[loose,height=2em]
\raisebox{-4mm}{$A_1$} & \raisebox{-4mm}{$A_2$} & & \raisebox{-4mm}{$A_n$} \\
\bullet &\bullet & \cdots & \bullet \\
\dTo^{f_1} & \dTo^{f_2} & & \dTo_{f_n} \\
\bullet &\bullet & \cdots & \bullet \\
\raisebox{4mm}{$B_1$} & \raisebox{4mm}{$B_2$} & & \raisebox{4mm}{$B_n$} \\
\end{diagram}
\qquad\qquad f_i : A_i \rightarrow B_i
\]
An arrow from one list of objects to another is simply a list of arrows of $\CC$ of the appropriate types. Note that there can only be an arrow between lists of the same length.
Composition is performed pointwise in the obvious fashion.

\noindent Formally, we set $[n] := \{ 1, \ldots , n \}$, and define an object of $\FM$ to be a pair $(n, A)$, where $n \in \Nat$, and 
$A : [n] \rightarrow \Ob{\CC}$.
Tensor product of objects is defined by $(n, A) \otimes (m, B) = (n+m, [A, B])$. The tensor unit is $\II = (0, {!})$, where $!$ is the unique function from the empty set.

A morphism $\lambda : (n, A) \rightarrow (m, B)$ can only exist if $n = m$, and is specified by a map $\lambda : [n] \rightarrow \Mor{\CC}$, satisfying
\[ \lambda_i : A_i \longrightarrow B_i . \]

\noindent Arrows in $F_{\Mon}(\CC )$ are thus simply those expressible in the form
\[ f_1 \otimes \cdots \otimes f_k : A_1 \otimes \cdots \otimes A_k \longrightarrow B_1 \otimes \cdots \otimes B_k . \]
Unicity of the monoidal functor to a monoidal category $\mathcal{M}$ extending a given functor $F : \CC \rightarrow U_{\Mon} \mathcal{M}$ is then immediate.

\noindent Note that 
\[ F_{\Mon} (\mathbf{1}) = (\Nat , = , + , 0) . \]

\subsection{Symmetric Monoidal Categories}
The  objects of $\FSM$ are the same as in the monoidal case.

An arrow $(n, A) \longrightarrow (n, B)$ is given by $(\pi , \lambda )$, where $\pi \in S(n)$ is a permutation, and
$\lambda_i : A_i \rightarrow B_{\pi (i)}$, $1 \leq i \leq n$.
\begin{diagram}[height=2em,width=1.5em,abut]
\raisebox{-4mm}{$A_1$} & & \raisebox{-4mm}{$A_2$} & & \raisebox{-4mm}{$A_3$} & & \raisebox{-4mm}{$A_4$} \\
\bullet &&\bullet && \bullet && \bullet \\
& \rdTo & &  \rdTo(4,2) & \dTo & \ldTo(6,2) & \\
\bullet &&\bullet && \bullet && \bullet \\
\raisebox{4mm}{$B_1$} && \raisebox{4mm}{$B_2$} && \raisebox{4mm}{$B_3$} && \raisebox{4mm}{$B_4$} \\
\end{diagram}

\noindent Composition in $F_{\SMon}(\CC )$
is described as follows.
Form paths of length 2, and compose the arrows from $\CC$ labelling these paths.
\[
\begin{diagram}[width=1em,abut]
\bullet & & \bullet & & \bullet \\
& \rdDash \ldTo^{f_2} & & & \dDash \\
\bullet & & \bullet & & \bullet \\
& \rdTo(4,2)^{g_1} & \dDash & \ldDash(4,2) \\
\bullet & & \bullet & & \bullet \\
\end{diagram}
\qquad = \qquad
\begin{diagram}[width=1em,abut]
\bullet & & \bullet & & \bullet \\
& \rdDash & & \rdTo^{g_1 \circ f_2}  \ldDash(4,2) & \\
\bullet & & \bullet & & \bullet \\
\end{diagram}
\]

\noindent Note that
$F_{\SMon}(\mathbf{1}) = \coprod_{n} S(n)$ (coproduct of categories). Thus the free monoidal category on the trivial generating category comprises (the disjoint union of) all the finite symmetric groups.\footnote{At this point, a possible step towards geometry presents itself. If we considered free braided monoidal categories, we would find a similar connection to the braid groups \cite{Kas}. However, we shall not pursue that here.}

\noindent Let $\mathcal{M}$ be a symmetric monoidal category, and consider a tensor product $A_1 \otimes \cdots \otimes A_n$.
Each element $\pi \in S(n)$ of the symmetric group $S(n)$ induces an isomorphism, which by abuse of notation we also write as $\pi$:
\[ \pi : A_1 \otimes \cdots \otimes A_n \isoarrow
A_{\pi (1)} \otimes \cdots \otimes A_{\pi (n)} . \]
Now note that under the above concrete description of $\FSM$, arrows 
\[ (\pi , \lambda ) : (n, A) \longrightarrow (n, B) \]  
can be written as 
\begin{equation}
\label{smonform}
 \pi^{-1} \circ \bigotimes_{i=1}^n f_i \; : \; \bigotimes_{i=1}^n A_i \; \longrightarrow   \; \bigotimes_{i=1}^n B_i . 
\end{equation}
Again, the freeness property follows directly.
The main observation to be made is that such arrows are closed under composition:
\begin{equation}
\label{smoncomp}
(\sigma^{-1} \circ \bigotimes_{i=1}^n g_i ) \circ (\pi^{-1} \circ \bigotimes_{i=1}^n f_i ) =
(\sigma \circ \pi )^{-1} \circ \bigotimes_{i=1}^n (g_{\pi (i)} \circ f_i )  
\end{equation}
and tensor product:
\begin{equation}
\label{smontens}
(\pi^{-1} \circ \bigotimes_{i} f_i ) \otimes (\sigma^{-1} \circ \bigotimes_{i} g_i ) \;\; = \;\; (\pi \otimes \sigma)^{-1} \circ \left( \bigotimes_{i} f_i \otimes \bigotimes_{i} g_i \right) , 
\end{equation}
where if $\pi \in S(n)$, $\sigma \in S(m)$, $\pi \otimes \sigma \in S(n+m)$ is the evident concatenation of the two permutations, as defined in the 
Introduction.

\noindent The above closed form expression for composition requires the `naturality square':
\[ \begin{diagram}
\bigotimes_i B_{\pi(i)} && \rTo^{\pi^{-1}} && \bigotimes_i B_i \\
\dTo^{\bigotimes_i g_{\pi(i)}} && && \dTo_{\bigotimes_i g_i} \\
\bigotimes_i C_{\sigma \circ \pi(i)} && \rTo_{\sigma \circ (\sigma \circ \pi )^{-1}} && \bigotimes_i C_{\sigma (i)} \\
\end{diagram}
\]

\subsection{Traced Symmetric Monoidal Categories}
We now come to a key case, that of traced symmetric monoidal categories. Much of the structure of strongly compact closed categories in fact appears already at the traced level. This is revealed rather clearly by our incremental development of the free constructions.

We begin by recalling the basic notions. let $(\CC , \otimes , \II , \tau)$ be a symmetric monoidal category. 
Here $\tau_{A,B} : A \otimes B \isoarrow B \otimes A$ is the symmetry or twist natural isomorphism.
A \emph{trace} on $\CC$ is a family of functions
\[ \Tr_{A,B}^U : \CC (A \otimes U, B \otimes U) \longrightarrow \CC (A, B) \]
for objects $A$, $B$, $U$ of $\CC$, satisfying the following axioms:
\begin{itemize}
\item {\bf Input Naturality}:
\[ \mathsf{Tr}^U_{A, B}(f) \circ g=\mathsf{Tr}^U_{A',B}(f \circ (g\otimes 1_U))\]
where 
$f: A\otimes U \to B \otimes U$, $g: A' \to A$,

\item {\bf Output Naturality}:
\[ g \circ \mathsf{Tr}^U_{A, B}(f)=\mathsf{Tr}^U_{A,B'}((g\otimes 1_U) \circ f)\]
where 
$f: A\otimes U \to B \otimes U$, $g:B \to B'$,
\item {\bf Feedback Dinaturality}:
\[ \mathsf{Tr}^U_{A,B}((1_B\otimes g) \circ f) = \mathsf{Tr}^{U'}_{A,B}(f \circ (1_A\otimes g)) \]
where $f: A\otimes U \to B \otimes U'$, $g: U' \to U$,
\item {\bf Vanishing (I,II)}:
\[ \mathsf{Tr}^I_{A,B}(f)=f \qquad  \mbox{and} \qquad \mathsf{Tr}^{U \otimes V}_{A,B}(g)=\mathsf{Tr}^U_{A,B}(\mathsf{Tr}^V_{A\otimes U,B \otimes U}(g)) \]
where
$f: A\otimes I \to B\otimes I$ and 
$g: A\otimes U \otimes V \to B\otimes U \otimes V$.
\item {\bf Superposing}:
\[g \otimes \mathsf{Tr}_{A,B}^U(f) = \mathsf{Tr}_{W\otimes A,Z\otimes B}^U(g \otimes f)\]
where $f:A\otimes U \to B \otimes U$ and $g:W \to Z$ .
\item {\bf Yanking}:
\[ \mathsf{Tr}^U_{U,U}(\tau_{U,U}) = 1_U . \]
\end{itemize}

\noindent Diagrammatically, we depict the trace as feedback:
\[
\raisebox{1in}{\begin{diagram}[loose,height=.8em,width=0pt]
& A \otimes U  & \rTo^{f} & B \otimes U \\
\ & & \hLine & & \\
& A & \rTo^{\Tr_{A,B}^{U}(f)} & B \\
\end{diagram}}
\qquad \qquad \qquad 
\psset{unit=1in,cornersize=absolute}%
\begin{pspicture}(0,-1.555556)(1.4,0.444444)
\psframe[fillstyle=solid,fillcolor=yellow,linecolor=blue](-0.00246,-0.113571)(0.446904,0.113571)
\rput(0.222222,0){$\cdots$}
\psframe[fillstyle=solid,fillcolor=yellow,linecolor=blue](0.441985,-0.113571)(0.891349,0.113571)
\rput(0.666667,0){$\cdots$}
\uput{0.5ex}[d](0.222222,-0.111111){$A$}
\uput{0.5ex}[d](0.666667,-0.111111){$U$}
\newgray{fillval}{0.3}
\pscircle[fillstyle=solid,fillcolor=fillval](0.066667,0){0.027757}
\pscircle[fillstyle=solid,fillcolor=fillval](0.377778,0){0.027757}
\pscircle[fillstyle=solid,fillcolor=fillval](0.511111,0){0.027757}
\pscircle[fillstyle=solid,fillcolor=fillval](0.822222,0){0.027757}
\psframe[fillstyle=solid,fillcolor=yellow,linecolor=blue](-0.00246,-1.224682)(0.446904,-0.99754)
\rput(0.222222,-1.111111){$\cdots$}
\psframe[fillstyle=solid,fillcolor=yellow,linecolor=blue](0.441985,-1.224682)(0.891349,-0.99754)
\rput(0.666667,-1.111111){$\cdots$}
\uput{0.5ex}[u](0.222222,-1){$B$}
\uput{0.5ex}[u](0.666667,-1){$U$}
\pscircle[fillstyle=solid,fillcolor=fillval](0.066667,-1.111111){0.027757}
\pscircle[fillstyle=solid,fillcolor=fillval](0.377778,-1.111111){0.027757}
\pscircle[fillstyle=solid,fillcolor=fillval](0.511111,-1.111111){0.027757}
\pscircle[fillstyle=solid,fillcolor=fillval](0.822222,-1.111111){0.027757}
\psline[arrowsize=0.0375in 0,arrowlength=2,arrowinset=0]{->}(0.822222,-1.111111)(0.822222,-1.333333)
(1.044444,-1.333333)
(1.044444,0.222222)
(0.822222,0.222222)
(0.822222,0)
\psline[arrowsize=0.0375in 0,arrowlength=2,arrowinset=0]{->}(0.511111,-1.111111)(0.511111,-1.555556)
(1.4,-1.555556)
(1.4,0.444444)
(0.511111,0.444444)
(0.511111,0)
\rput(1.222222,-0.555556){$\cdots$}
\end{pspicture}%
\]
It corresponds to \emph{contracting indices} in traditional tensor calculus.

We now consider the free symmetric monoidal category generated by $\CC$, $F_{\SMon}(\CC )$, as described in the previous section. Recall that morphisms in $F_{\SMon}(\CC )$ can be written as
\[ \pi^{-1} \circ \bigotimes_{i=1}^n f_i \; : \; \bigotimes_{i=1}^n A_i \; \longrightarrow \;  \bigotimes_{i=1}^n B_i . \]
Our first observation is that \emph{this category is already canonically traced}. Understanding why this is so, and why $F_{\SMon}(\CC )$ is \emph{not} the free traced category, will lay bare the essential features of the free construction we are seeking.

Note firstly that, if there is an arrow $f : (n, A) \otimes (p, U) \rightarrow (m, B) \otimes (p, U)$ in $F_{\SMon}(\CC )$, then we must have $n +p = m+p$, and hence $n=m$.
Thus we can indeed hope to form an arrow $A \rightarrow B$ in $F_{\SMon}(\CC )$.
Now we consider the `geometry' arising from the permutation $\pi$, together with the diagrammatic feedback interpretation of the trace. We illustrate this with the following example.
\subsubsection{Example}
Consider the arrow $f = \pi^{-1} \circ \bigotimes_{i=1}^4 f_i$, where $\pi = (2, 4, 3, 1)$, and $f_i : A_i \rightarrow B_{\pi (i)}$. Suppose that $A_i = U_i = B_i$, $2 \leq i \leq 4$, and write $U = \bigotimes_{i=2}^4 U_i$. We wish to compute $\Tr_{A_1 , B_1}^U (f)$. The geometry is made clear by the following figure.
\begin{center}
\psset{unit=1in,cornersize=absolute}%
\begin{pspicture}(0,-1.428571)(2.428571,0.571429)
\pscircle[fillstyle=solid,fillcolor=yellow,linecolor=blue](0.142857,0){0.148392}
\rput(0.142857,0){$1$}
\pscircle[fillstyle=solid,fillcolor=yellow,linecolor=blue](0.714286,0){0.148392}
\rput(0.714286,0){$2$}
\pscircle[fillstyle=solid,fillcolor=yellow,linecolor=blue](1.285714,0){0.148392}
\rput(1.285714,0){$3$}
\pscircle[fillstyle=solid,fillcolor=yellow,linecolor=blue](1.857143,0){0.148392}
\rput(1.857143,0){$4$}
\pscircle[fillstyle=solid,fillcolor=yellow,linecolor=blue](0.142857,-0.857143){0.148392}
\rput(0.142857,-0.857143){$1$}
\pscircle[fillstyle=solid,fillcolor=yellow,linecolor=blue](0.714286,-0.857143){0.148392}
\rput(0.714286,-0.857143){$2$}
\pscircle[fillstyle=solid,fillcolor=yellow,linecolor=blue](1.285714,-0.857143){0.148392}
\rput(1.285714,-0.857143){$3$}
\pscircle[fillstyle=solid,fillcolor=yellow,linecolor=blue](1.857143,-0.857143){0.148392}
\rput(1.857143,-0.857143){$4$}
\psline[arrowsize=0.05in 0,arrowlength=2,arrowinset=0]{->}(0.243872,-0.101015)(0.61327,-0.756128)
\psline[arrowsize=0.05in 0,arrowlength=2,arrowinset=0]{->}(0.815301,-0.101015)(1.756128,-0.756128)
\psline[arrowsize=0.05in 0,arrowlength=2,arrowinset=0]{->}(1.285714,-0.142857)(1.285714,-0.714286)
\psline[arrowsize=0.05in 0,arrowlength=2,arrowinset=0]{->}(1.756128,-0.101015)(0.243872,-0.756128)
\psline[arrowsize=0.05in 0,arrowlength=2,arrowinset=0]{->}(1.857143,-1)(1.857143,-1.142857)
(2.142857,-1.142857)
(2.142857,0.285714)
(1.857143,0.285714)
(1.857143,0.142857)
\psline[arrowsize=0.05in 0,arrowlength=2,arrowinset=0]{->}(1.285714,-1)(1.285714,-1.285714)
(2.285714,-1.285714)
(2.285714,0.428571)
(1.285714,0.428571)
(1.285714,0.142857)
\psline[arrowsize=0.05in 0,arrowlength=2,arrowinset=0]{->}(0.714286,-1)(0.714286,-1.428571)
(2.428571,-1.428571)
(2.428571,0.571429)
(0.714286,0.571429)
(0.714286,0.142857)
\psline[arrowsize=0.05in 0,arrowlength=2,arrowinset=0]{<-}(0.142857,0.142857)(0.142857,0.571429)
\psline[arrowsize=0.05in 0,arrowlength=2,arrowinset=0]{->}(0.142857,-1)(0.142857,-1.428571)
\end{pspicture}%
\end{center}
We simply follow the path leading from $A_1$ to $B_1$: 
\[ 1 \rightarrow 2 \rightarrow 4 \rightarrow 1 \]
composing the arrows which label the arcs in the path: thus
\[ \Tr_{A_1 , B_1}^U (f) = f_4 \circ f_2 \circ f_1 \]
in this case. A similar procedure can always be followed for arrows in the form~(\ref{smonform}), which as we have seen is general for $\FSM$. (It is perhaps not immediately obvious that a path from an input will always emerge from the feedback zone into an output. See the following Proposition~\ref{permprop}). Moreover, this assignment does lead to a well-defined trace on $\FSM$.
However, this is \emph{not} the free traced structure generated by $\CC$.

To see why this construction does not give rise to the free interpretation of the trace, note that in our example, the node $U_3$ is involved in a cycle $U_3 \rightarrow U_3$, which does not appear in our expression for the trace of $f$. In fact, note that if we trace an endomorphism $f : A \rightarrow A$ out completely, \ie writing $f : \II \otimes A \rightarrow \II \otimes A$ we form $\Tr_{\II , \II}^{A}(f) : \II \rightarrow \II$, then we get a \emph{scalar}. Indeed, the importance of scalars in our context is exactly that they give the values of loops. Now in $F_{\SMon}(\CC )$, the tensor unit is the empty list, and there is only one scalar --- the identity. It is exactly this collapsed interpretation of the scalars which prevents the trace we have just (implicitly) defined on $F_{\SMon}(\CC )$ from giving the free traced category on $\CC$.

We now turn to a more formal account, culminating in the construction of $F_{\Tr}(\CC )$.

\subsubsection{Geometry of permutations}
We begin with a more detailed analysis of permutations $\pi \in S(n+m)$, with the decomposition $n +m$ reflecting our distinction between the visible (input-output) part of the type, and the hidden (feedback) part, arising from the application of the trace.

We define an \emph{$n$-path} (or if $n$ is understood, an \emph{input-output path}) of $\pi$ to be a sequence
\[ i, \pi (i), \pi^2 (i), \ldots , \pi^{k+1} (i) = j \]
where $1 \leq i,j \leq n$, and for all $0 < p < k$, $\pi^p (i) > n$. We write $\Ppi (i)$ for the $n$-path starting from $i$, which is clearly unique if it exists, and also $\ppi (i) = j$. We write $\Ppio (i)$ for the set of elements of $\{ n+1 , \dots , n+m \}$ appearing in the sequence.
A \emph{loop} of $\pi$ is defined to be a cycle
\[ j, \pi (j), \ldots , \pi^{k+1}(j) = j \]
where $n < j \leq n+m$. We write $\LL (\pi )$ for the set of all loops of $\pi$.

\begin{proposition}
\label{permprop}
The following holds for any permutation $\pi \in S(n+m)$:
\begin{enumerate}
\item For each $i$, $1 \leq i \leq n$, $\Ppi (i)$ is well-defined.
\item $\ppi \in S(n)$.
\item The family of sets
\[ \{ \Ppio (i) \mid 1 \leq i \leq n \} \cup  \LL (\pi ) \]
form a partition of $\{ n+1 , \ldots , n+m\}$.
\end{enumerate}
\end{proposition}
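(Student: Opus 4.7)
The plan is to reduce all three parts to the cycle decomposition of $\pi$ viewed as a permutation of $\{1,\ldots,n+m\}$: each element lies in a unique $\pi$-cycle, and the three assertions amount to a careful accounting of how these cycles meet the split $\{1,\ldots,n\} \sqcup \{n+1,\ldots,n+m\}$.

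For (1), finiteness suffices on its own: since some positive power $\pi^N$ fixes $i$, the set of positive exponents $p$ with $\pi^p(i) \leq n$ is non-empty, and taking its minimum yields the required $k+1$. So $\Ppi(i)$ is uniquely determined.

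For (2), the cleanest strategy is to apply (1) also to $\pi^{-1}$, producing a map $q : [n] \to [n]$ by following $\pi^{-1}$ from $i$ until one re-enters $[n]$. I would then observe that reading the $\pi$-path $i, \pi(i), \ldots, \pi^{k+1}(i) = \ppi(i)$ in reverse is precisely the $\pi^{-1}$-path from $\ppi(i)$ to $i$ that defines $q(\ppi(i))$. Hence $q \circ \ppi = \id_{[n]}$ and, by the symmetric argument, $\ppi \circ q = \id_{[n]}$, so $\ppi \in S(n)$.

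For (3), classify the $\pi$-cycles. A cycle disjoint from $[n]$ is exactly an element of $\LL(\pi)$, and these cycles cover precisely those elements of $\{n+1,\ldots,n+m\}$ whose $\pi$-orbit never enters $[n]$. Any other cycle meets $[n]$ in a non-empty list $i_1,\ldots,i_r$ in cyclic order; the elements of $\{n+1,\ldots,n+m\}$ strictly between $i_s$ and $i_{s+1}$ along that cycle are, by definition, $\Ppio(i_s)$. Since cycles of $\pi$ are disjoint and each element of a second-kind cycle sits on exactly one such arc, the displayed family is a partition. The main bookkeeping point — the only place any real care is needed — is disjointness of the $\Ppio(i)$'s for distinct $i$; this is handled by the same trajectory-reversal idea as in (2), namely that iterating $\pi^{-1}$ from a putative common element $j$ until the first return to $[n]$ recovers a uniquely determined predecessor, forcing $i = i'$.
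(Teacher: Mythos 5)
Your proof is correct, and it reaches the paper's conclusions by a recognizably different organization, so a comparison is worthwhile. For (1) the paper does not invoke the finite order of $\pi$: it supposes the forward orbit never re-enters $[n]$, takes the least repetition $\pi^{k+1}(i) = \pi^{l+1}(i)$ among the values $>n$, and applies $\pi^{-1}$ to contradict minimality; your observation that $\pi^N(i) = i \leq n$ for some $N>0$ short-circuits this and is arguably cleaner, though both arguments ultimately rest on finiteness plus injectivity of $\pi$. For (2) the paper proves injectivity of $\ppi$ directly (from $\ppi (i) = \ppi (j)$, apply $(\pi^{-1})^{k+1}$ to get $i = \pi^{l-k}(j) \leq n$, forcing $l=k$ and $i=j$) and concludes bijectivity from finiteness of $[n]$, whereas you exhibit a two-sided inverse by reversing trajectories; your version buys a little more, since it identifies $\ppi^{-1}$ explicitly as $p_{\pi^{-1}}$, which sits well with the relational formula $\ppi = \pi_{1,1} \cup \pi_{1,2};\pi_{2,2}^{\ast};\pi_{2,1}$ given right after the proposition. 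For (3) the paper argues pointwise --- pairwise disjointness of the $\Ppio (i)$ by the same injectivity-style reasoning, disjointness from loops as in (1), and coverage by iterating $\pi^{-1}$ from $j > n$ until either a cycle closes (a loop) or one lands in $[n]$ (so $j \in \Ppio (i)$) --- while you organize the same combinatorics globally via the cycle decomposition, splitting cycles into those avoiding $[n]$ (exactly the loops) and those meeting it, whose hidden elements fall into disjoint arcs between consecutive visible elements; note the arc description already yields disjointness of the $\Ppio (i)$ within a cycle, and elements of different cycles cannot share a $\Ppio$, so your trajectory-reversal step is a second proof of a fact your framework gives for free. One small point common to both accounts: $\Ppio (i)$ may be empty (when $\pi (i) \leq n$), so ``partition'' must be read as permitting empty blocks; the paper glosses this too.
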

\begin{proof}
\begin{enumerate}
\item Consider the sequence
\[ i, \pi (i), \pi^2 (i) , \ldots \]
Either we reach $\pi ^{k+1} = j \leq n$, or there must be a least $l$ such that
\[ \pi^{k+1} (i) = \pi^{l+1} (i) > n, \qquad 0 \leq k < l . \]
(Note that the fact that $i \leq n$ allows us to write the left hand term as $\pi^{k+1} (i)$). But then, applying $\pi^{-1}$, we conclude that $\pi^{k} (i) = \pi^{l} (i)$, a contradiction.

\item If $\ppi (i) = \ppi (j)$, then $\pi^{k+1} (i) = \pi^{l+1} (j)$, where say $k \leq l$. Applying $(\pi^{-1})^{k+1}$, we obtain $i = \pi^{l-k}(j) \leq n$, whence $l = k$ and $i = j$.

\item It is standard that distinct cycles are disjoint. We can reason similarly to part (2) to show that 
if $\Ppio (i)$ meets $\Ppio (j)$, then $i = j$.
Similar reasoning to (1) shows that $\Ppio (i) \cap L = \varnothing$, for $L \in \LL (\pi)$. Finally, iterating $\pi^{-1}$ on $j>n$ either forms a cycle, or reaches $i \leq n$; in the latter case, $j \in \Ppio (i)$.
\end{enumerate}
\end{proof}

We now give a more algebraic description of the permutation $\ppi$.
Firstly, we extend our notation by defining $[n{:}m] := \{ n+1 , \ldots , m \}$.
Now we can write $[n{+}m] = [n] \dunion [n{:}n{+}m]$, where $\dunion$ is disjoint union.
We can use this decomposition to express $\pi \in S(n{+}m)$  as the disjoint union of the following four maps:
\[ \begin{array}{ll}
\pi_{1,1} : [n] \longrightarrow [n]  & \qquad \pi_{1,2} : [n] \longrightarrow [n{:}n{+}m] \\
\pi_{2,1} : [n{:}n{+}m] \longrightarrow [n]  & \qquad \pi_{2,2} : [n{:}n{+}m] \longrightarrow [n{:}n{+}m] 
\end{array} \]
We can view these maps as \emph{binary relations} on $[n{+}m]$ (they are in fact injective partial functions), and use relational algebra (union $R \cup S$, relational composition $R ; S$ and reflexive transitive closure $R^{\ast}$) to express $\ppi$ in terms of the $\pi_{ij}$:
\[ \ppi \;\; = \;\; \pi_{1,1} \; \cup \; \pi_{1, 2} ; \pi_{2, 2}^{\ast} ; \pi_{2, 1} . \]
We can also characterize the elements of $\LL (\pi )$:
\[ j \in \bigcup \LL (\pi ) \quad \Longleftrightarrow \quad \langle j, j \rangle \in \pi_{2,2}^{\ast} \cap \id_{[n{:}n{+}m]} . \]

\subsubsection{Loops}
We follow Kelly and Laplaza \cite{KL80} in making the following basic definitions.
The \emph{loops} of a category $\CC$, written $\Loop{\CC}$, are the endomorphisms of $\CC$ quotiented by the following equivalence relation: a composition
\begin{diagram}
A_1 & \rTo^{f_1} & A_2 & \rTo^{f_2} & \cdots & A_k & \rTo^{f_k} & A_1 \\
\end{diagram}
is equated with all its cyclic permutations.
A \emph{trace function} on $\CC$ is a map on the endomorphisms of $\CC$ which respects this equivalence.
We note in particular  the following standard result \cite{JSV}:
\begin{proposition}
If $\CC$ is traced, then the trace applied to endomorphisms:
\[ g : A \rightarrow A \;\; \longmapsto \;\; \Tr_{A, A}^{\II} (f) : \II \rightarrow \II \]
is a (scalar-valued) trace function.
\end{proposition}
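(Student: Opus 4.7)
The plan is to reduce the cyclic-invariance condition defining a trace function to the length-two case and then deduce that instance from feedback dinaturality. Concretely, an equivalence class of loops is generated by elementary cyclic rotations of the form
\[
f_k \circ f_{k-1} \circ \cdots \circ f_1 \;\sim\; f_1 \circ f_k \circ \cdots \circ f_2,
\]
so it is enough to show that for any pair $f : A \to B$ and $h : B \to A$,
\[
\Tr^{A}_{\II,\II}(h \circ f) \;=\; \Tr^{B}_{\II,\II}(f \circ h).
\]
The general cyclic case then follows by iteration with $f := f_1$ and $h := f_k \circ \cdots \circ f_2$.

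For the length-two identity, I would instantiate the dinaturality axiom
\[
\Tr^{U}_{A,B}\bigl((1_B \otimes g)\circ \varphi\bigr) \;=\; \Tr^{U'}_{A,B}\bigl(\varphi \circ (1_A \otimes g)\bigr), \qquad \varphi : A \otimes U \to B \otimes U',\ g : U' \to U,
\]
with $A = B = \II$, taking $\varphi := f : \II \otimes A \to \II \otimes B$ (using strictness to identify $\II \otimes X$ with $X$) and $g := h : B \to A$. Strictness collapses $1_\II \otimes h$ to $h$, so the two sides of dinaturality become exactly $\Tr^{A}_{\II,\II}(h \circ f)$ and $\Tr^{B}_{\II,\II}(f \circ h)$, as required.

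I should also verify well-definedness at the source: an endomorphism $g : A \to A$ is reinterpreted as a morphism $\II \otimes A \to \II \otimes A$ via the strict unit law, so $\Tr^{A}_{\II,\II}(g) : \II \to \II$ is indeed a scalar. No other trace axioms are needed for this proposition; superposing, yanking and vanishing play no role here.

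The only real obstacle is notational/type-theoretic bookkeeping: making sure that the types match in the dinaturality instance (in particular choosing which object plays the role of $U$ and which of $U'$, and keeping the strict unit coercions invisible but correct). Once the rewriting of $h \circ f$ as $(1_\II \otimes h) \circ f$ and of $f \circ h$ as $f \circ (1_\II \otimes h)$ is done carefully, the proof is a single application of dinaturality, and the claim that the assignment $g \mapsto \Tr^{A}_{\II,\II}(g)$ descends to $\Loop{\CC}$ is immediate.
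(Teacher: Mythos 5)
Your proof is correct: the single instance of Feedback Dinaturality with both outer objects set to $\II$ (strictness making the unit tensors invisible) gives exactly $\Tr^{A}_{\II,\II}(h \circ f) = \Tr^{B}_{\II,\II}(f \circ h)$ for $f : A \to B$, $h : B \to A$, and iterating this one-step rotation with $f := f_1$, $h := f_k \circ \cdots \circ f_2$ generates all cyclic permutations, which is precisely what descent to $\Loop{\CC}$ requires. The paper offers no proof of its own, citing the result as standard from Joyal--Street--Verity, and your dinaturality (``sliding'') argument is exactly the standard one; you have also silently and correctly repaired the typo in the statement, which should read $g \mapsto \Tr^{A}_{\II,\II}(g)$ rather than $\Tr^{\II}_{A,A}(f)$.
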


\subsubsection{Traces of decomposable morphisms}
We now turn to a general proposition about traced categories, from which the structure of the free category will be readily apparent. It shows that whenever a morphism is decomposable into a tensor product followed by a permutation (as all morphisms in $\FSM$ are), then the trace can be calculated explictly by composing over paths. 

\begin{proposition}
\label{traceprop}
Let $\CC$ be a traced symmetric monoidal category, and consider a morphism of the form
\[ f = \pi^{-1} \circ \bigotimes_{i=1}^{n+m} f_i :  C 
\longrightarrow D  \]
where $C = \bigotimes_{i=1}^{n} A_i \otimes \bigotimes_{j= n+1}^{n+m} U_j$, $D =   \bigotimes_{i=1}^{n} B_i \otimes \bigotimes_{j= n+1}^{n+m} U_j$,
$\pi \in S(n+m)$, and $f_i : C_i \rightarrow D_{\pi (i)}$. 
Then
\[ \Tr_{C,D}^U (f) = \left(\prod_{l \in \LL (\pi )} s_l \right) \sdot (\ppi^{-1} \circ \bigotimes_{i=1}^n g_n ) \]
where for each $1 \leq i \leq n$, with $n$-path 
\[ \Ppi (i) = i, p_1 , \ldots , p_k , j \]
$g_i$ is the composition
\begin{diagram}
A_i & \rTo^{f_i} & U_{p_1} & \rTo^{f_{p_1}} & \cdots \cdots & U_{p_k} & \rTo^{f_{p_k}} & B_j \\
\end{diagram}
and for $l = p_1 , \cdots , p_k , p_1 \in \LL (\pi )$,
$s_l = \Tr_{\II , \II}^{U_{p_1}}(f_{p_k} \circ \cdots \circ f_{p_1})$.
The product $\prod_{l} s_l$ refers to multiplication in the monoid of scalars, which we know by Proposition~\ref{scprop} to be commutative.
\end{proposition}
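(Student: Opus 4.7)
The plan is to proceed by induction on $m$, peeling off one hidden component at a time using Vanishing~(II). For the base case $m=0$, Vanishing~(I) gives $\Tr^{\II}_{A,B}(f) = f$; since $\LL(\pi)$ is empty, each $n$-path is $(i,i)$ with $g_i = f_i$, and $\ppi = \pi$, so the formula reduces tautologically to $f$.

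For the inductive step, re-bracket $U = U' \otimes U_{n+m}$ (free in the strict setting) and apply Vanishing~(II) to factor $\Tr^U$ as $\Tr^{U'} \circ \Tr^{U_{n+m}}$. The goal is to exhibit $\Tr^{U_{n+m}}(f)$ as a morphism $\rho^{-1} \circ \bigotimes_{i=1}^{n+m-1} f'_i$, possibly multiplied by a scalar, for a smaller permutation $\rho \in S(n+m-1)$, so that the induction hypothesis applies. By Proposition~\ref{permprop}, the index $n+m$ lies either on a loop or on an $n$-path of $\pi$, and in each case superposing together with input/output naturality and feedback dinaturality isolate the $U_{n+m}$ wire as a separate tensor factor. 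If $n+m$ lies on an $n$-path at position $p_r$, then yanking splices out $p_r$: the two morphisms adjacent to the wire (namely $f_{p_{r-1}}$ and $f_{p_r}$) compose into a single new edge, shortening the path while preserving its endpoints and the rest of the path/loop structure. If $n+m$ lies on a loop, the same axioms compose two consecutive edges and shrink the loop by one; when the loop has been shrunk to a self-loop, yanking collapses the resulting endomorphism under $\Tr$ into the scalar $s_l$, which by (\ref{sdotcomp})--(\ref{sdotten}) multiplies the ambient diagram via the scalar action of Section~2. In every intermediate configuration, the morphism is of the decomposable form covered by the proposition, and its combinatorial data agree with those of $\pi$ with the appropriate index spliced out; the induction hypothesis, combined with commutativity of the scalar monoid (Lemma~\ref{scprop}) which lets the scalars $s_l$ be gathered in any order, delivers the stated product formula.

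The step I expect to carry the most technical weight is the reduction itself on $U_{n+m}$: one must manipulate the twist $\pi^{-1}$ using a web of symmetries, superposing and naturality to expose the $U_{n+m}$ wire as a genuinely isolated tensor factor, so that Yanking (or Feedback Dinaturality, when a path is being shortened through a nontrivial routing) can act cleanly. This is routine but fiddly bookkeeping; the axioms of a traced symmetric monoidal category are designed precisely to accommodate such manipulations, and strictness makes the re-bracketings invisible. Once $\rho$ and the new labelling have been extracted, their agreement with the path/loop data of Proposition~\ref{permprop} applied to the smaller permutation is immediate.
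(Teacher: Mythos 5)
Your proof is correct in outline, but it takes a genuinely different route from the paper's. The paper first decomposes the permutation globally: up to conjugation by some $\sigma$, it writes $\pi$ as a tensor product of its $n$-paths and loops, $\pi = \sigma \circ (\bigotimes_i \Ppi(i) \otimes \bigotimes_L L) \circ \sigma^{-1}$, then uses Lemma~\ref{permfeed} (conjugation-invariance of the trace, from Feedback Dinaturality) and Lemma~\ref{tenslemma} (tensor-multiplicativity of the trace) to split $\Tr^U_{C,D}(f)$ into a product of traces over the individual paths and loops, and finally evaluates each single cyclic path by Lemma~\ref{pathlemma}. You instead run one global induction on $m$, peeling off a single feedback wire at a time via Vanishing~II and splicing out the index $n+m$ locally; this is essentially the inductive mechanism of the paper's Lemma~\ref{pathlemma} (which uses the recursion~(\ref{shiftdecomp}), naturality, Superposing and Yanking in exactly the way you describe) applied to the whole morphism at once rather than to one cycle in isolation. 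What the paper's modularization buys is that the combinatorics (disjointness of paths and loops, Proposition~\ref{permprop}(3)) is invoked once, the fiddly symmetric-monoidal bookkeeping is quarantined inside two reusable lemmas, and each scalar $s_l$ visibly arises as the trace of its own loop block; what your approach buys is that Lemma~\ref{tenslemma} is never needed as a separate statement, at the cost of re-verifying at every step that the spliced data $(\rho, f'_i)$ are again decomposable with the expected path/loop structure. Two small points you should tighten: in the base case $m=0$ the $n$-paths are $i, \pi(i)$, not $(i,i)$ (though your conclusions $\ppi = \pi$, $g_i = f_i$ are right); and when a loop is shrunk wire by wire, the surviving endomorphism sits on whichever $U_{p_r}$ is peeled last, so the resulting scalar is a priori $\Tr^{U_{p_r}}_{\II,\II}$ of a cyclic rotation of $f_{p_k} \circ \cdots \circ f_{p_1}$ --- to match the stated $s_l$ you need the cyclic invariance of the scalar-valued trace on endomorphisms, i.e.\ the standard proposition from \cite{JSV} quoted in the paper, which you use implicitly but should cite. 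Also note that collapsing a self-loop into a scalar is just the definition of $\Tr^{U}_{\II,\II}$ applied to an endomorphism, not an instance of Yanking.
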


\noindent Taken together with the following instance of Superposing:
\begin{equation}
\label{sdottr}
\Tr_{A, B}^U (s \sdot f) = s \sdot \Tr_{A, B}^U (f)  
\end{equation}
this Proposition yields a closed form description of the trace on expressions of the form:
\begin{equation}
\label{trcform}
s \sdot (\pi^{-1} \circ \bigotimes_j f_j ) \; : \; \bigotimes_j A_j \; \longrightarrow \; \bigotimes_j B_j  .
\end{equation}
We approach the proof of this Proposition via a number of lemmas.

Firstly, a simple consequence of Feedback Dinaturality:
\begin{lemma}
\label{permfeed}
Let $U = \bigotimes_{i=1}^n U_i$, and $\sigma \in S(n)$.
Let $\sigma U = \bigotimes_{i=1}^n U_{\sigma (i)}$.
Then
\[ \Tr_{A,B}^U (f) = \Tr_{A,B}^{\sigma (U)} ((1_A \otimes\sigma ) \circ f \circ (1_B \otimes \sigma^{-1})) . \]
\end{lemma}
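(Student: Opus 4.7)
The plan is to derive this directly from the Feedback Dinaturality axiom of the trace, which is precisely the tool that lets one absorb an isomorphism applied to the feedback loop on one side by re-expressing it as being applied on the other side (and, here, relabeling the trace index).

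First I would set up the types carefully. The permutation $\sigma\in S(n)$ induces a canonical isomorphism, abusively also written $\sigma : U \isoarrow \sigma U$, coming from the symmetric monoidal structure (by the description in the Symmetric Monoidal Categories subsection). Its inverse $\sigma^{-1} : \sigma U \to U$ is the candidate reindexing map to plug into dinaturality. In the stated equation the intended types force the symmetric reading $(1_B\otimes\sigma)\circ f\circ(1_A\otimes\sigma^{-1}) : A\otimes\sigma U \to B\otimes\sigma U$, so that the outer trace is taken over $\sigma U$; I would just flag this as a harmless correction of the $A/B$ labels and proceed.

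Next, I would instantiate Feedback Dinaturality
\[
\Tr^{U}_{A,B}\bigl((1_B\otimes g)\circ h\bigr) \;=\; \Tr^{U'}_{A,B}\bigl(h\circ(1_A\otimes g)\bigr) \qquad (h : A\otimes U\to B\otimes U',\; g : U'\to U)
\]
with the choices $U' := \sigma U$, $g := \sigma^{-1} : \sigma U \to U$, and $h := (1_B\otimes\sigma)\circ f : A\otimes U \to B\otimes \sigma U$. On the left-hand side, $(1_B\otimes\sigma^{-1})\circ h = (1_B\otimes\sigma^{-1})\circ(1_B\otimes\sigma)\circ f = f$, using functoriality of $\otimes$ and $\sigma^{-1}\circ\sigma = 1_U$. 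On the right-hand side, $h\circ(1_A\otimes\sigma^{-1}) = (1_B\otimes\sigma)\circ f\circ(1_A\otimes\sigma^{-1})$. Substituting yields the desired identity.

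There is no real obstacle here beyond tracking the domain/codomain of $\sigma$ correctly and making one application of the axiom; the lemma is essentially the statement that Feedback Dinaturality, specialized to the isomorphism $\sigma$ on the feedback wire, performs the expected relabeling. It is worth recording that this is the only place one needs Dinaturality in the present form, and the lemma will then combine in Proposition~\ref{traceprop} with Vanishing~(II) to reduce traces over $U = \bigotimes U_i$ to iterated traces over the individual $U_i$ in whatever cyclic order the paths $\Ppi(i)$ and loops $\LL(\pi)$ prescribe.
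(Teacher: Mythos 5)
Your proof is correct and takes exactly the approach the paper intends: the paper introduces Lemma~\ref{permfeed} only as ``a simple consequence of Feedback Dinaturality'' with no written proof, and your instantiation $g = \sigma^{-1} : \sigma U \to U$, $h = (1_B \otimes \sigma) \circ f$ is precisely that consequence. You are also right to flag and repair the typo in the statement --- the type-correct reading is $(1_B \otimes \sigma) \circ f \circ (1_A \otimes \sigma^{-1})$, with the canonical isomorphism $\sigma : U \isoarrow \sigma U$ as described in the paper's symmetric monoidal section.
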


\begin{lemma}
\label{tenslemma}
\[ \Tr_{A, B}^U (f) \otimes \Tr_{C, D}^V (g) = \Tr_{A \otimes C, B \otimes D}^{V \otimes U} ( (1_A \otimes \tau_{U, D \otimes V}) \circ (f \otimes g) \circ (1_{A} \otimes \tau_{C \otimes V, U}) ) . \]
\end{lemma}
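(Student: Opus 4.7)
The plan is to fold the tensor of two traces into a single trace over $V \otimes U$ in three steps: first apply Superposing to pull the outer $\Tr^V$ out of the tensor product; then apply a symmetric form of Superposing to likewise pull the inner $\Tr^U$ out, inserting the twist maps needed to move the traced object $U$ into the rightmost slot; finally collapse the resulting iterated trace via Vanishing II.

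\smallskip

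\noindent\emph{Step 1.} The Superposing axiom, applied with non-traced factor $\Tr^U_{A,B}(f) : A \to B$ on the left and traced morphism $g : C \otimes V \to D \otimes V$ on the right, gives
\[
\Tr^U_{A,B}(f) \otimes \Tr^V_{C,D}(g) \;=\; \Tr^V_{A \otimes C,\, B \otimes D}\bigl(\Tr^U_{A,B}(f) \otimes g\bigr).
\]

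\smallskip

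\noindent\emph{Step 2.} The Superposing axiom is stated with the non-traced factor on the left, but we now need the mirror form (traced factor on the left). This is derivable from the stated axiom by conjugating with $\tau$ and absorbing the two symmetries through the trace by Input and Output Naturality. Applying this mirror form inside the outer $\Tr^V$, we must further conjugate by the twists that relocate the traced object $U$ from its interleaved position inside $f \otimes g$ to the rightmost position, obtaining
\[
\Tr^U_{A,B}(f) \otimes g \;=\; \Tr^U_{A \otimes C \otimes V,\, B \otimes D \otimes V}\bigl((1_B \otimes \tau_{U, D \otimes V}) \circ (f \otimes g) \circ (1_A \otimes \tau_{C \otimes V, U})\bigr).
\]
(The leading $1_A$ in the displayed statement should, on type-checking grounds, be $1_B$.)

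\smallskip

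\noindent\emph{Step 3.} Substituting into Step 1 yields an iterated trace $\Tr^V_{A \otimes C, B \otimes D}\bigl(\Tr^U_{A \otimes C \otimes V,\, B \otimes D \otimes V}(h)\bigr)$, where $h$ is the parenthesised morphism above. Applying Vanishing II in the form $\Tr^V \circ \Tr^U = \Tr^{V \otimes U}$ (the $U' := V$, $V' := U$ instance of the axiom as stated) collapses this into the single trace $\Tr^{V \otimes U}_{A \otimes C,\, B \otimes D}(h)$ asserted in the lemma.

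\smallskip

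\noindent The main obstacle is symmetry bookkeeping: the Superposing axiom places the feedback object in the rightmost slot, whereas in $f \otimes g$ the objects $U$ and $V$ are interleaved with the visible parts $A, B, C, D$. Inserting the correct $\tau$-maps, and verifying via Input and Output Naturality that they can be slid through the traces, is the entire technical content; once it is done, Vanishing II carries out the final collapse formally.
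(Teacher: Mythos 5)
Your proof is correct and takes essentially the same route as the paper's: Superposing to pull out $\Tr^V$, then the mirrored Superposing for the left-hand traced factor, which the paper derives exactly as you sketch (naturality of $\tau$, the stated Superposing axiom, Input/Output Naturality to slide the symmetries through the trace, and symmetric monoidal coherence), and finally Vanishing II to collapse the iterated trace. Your type-checking observation is also correct: the leading $1_A$ in the displayed statement should indeed be $1_B$, a typo that persists in the paper's own proof as well.
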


\noindent The proof is in the Appendix.

We now show how the trace is evaluated along cyclic paths of any length.
We write $\sigma_{k+1} = 
\begin{pmatrix}
1 & 2 & \cdots & k & k+1 \\
2 & 3 & \cdots & k+1 & 1 \end{pmatrix}$, the cyclic permutation of length $k+1$. Note the useful recursion formula:
\begin{equation}
\label{shiftdecomp}
\sigma_{k+1} = (\tau \otimes 1) \circ (1 \otimes \sigma_k ) .
\end{equation}
Suppose we have morphisms $f_i : A_i \rightarrow A_{i+1}$, $1 \leq i \leq k+1$.
We write $U = \bigotimes_{i=2}^{k+1} A_i$, and $V = \bigotimes_{i=3}^{k+1} A_i$. (By convention, a tensor product over an empty range of indices is taken to be the tensor unit $\II$).
\begin{lemma}
\label{pathlemma}
For all $k \geq 0$:
\[ \Tr_{A_{1}, A_{k+2}}^{U} (\sigma_{k+1} \circ \bigotimes_{i=1}^{k+1}f_i ) = f_{k+1} \circ f_k \circ \cdots \circ f_1  . \]
\end{lemma}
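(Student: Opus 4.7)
The plan is to prove Lemma~\ref{pathlemma} by induction on $k$, exploiting the recursive decomposition (\ref{shiftdecomp}) of the cyclic permutation together with the vanishing and naturality axioms of the trace.

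For the base case $k = 0$, the permutation $\sigma_1$ is the identity on a single factor, the feedback object $U = \bigotimes_{i=2}^{1} A_i$ is the empty tensor, i.e.\ $\II$, and the claim $\Tr^{\II}_{A_1, A_2}(f_1) = f_1$ is immediate from Vanishing~I.

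For the inductive step, I would split $U = A_2 \otimes V$ where $V = \bigotimes_{i=3}^{k+1} A_i$, and apply Vanishing~II to write
\[
\Tr^{A_2 \otimes V}_{A_1,A_{k+2}}(\sigma_{k+1} \circ \bigotimes_{i=1}^{k+1} f_i)
 = \Tr^{A_2}_{A_1, A_{k+2}}\!\bigl(\Tr^{V}_{A_1 \otimes A_2, A_{k+2} \otimes A_2}(\sigma_{k+1} \circ \bigotimes_{i=1}^{k+1} f_i)\bigr).
\]
Using (\ref{shiftdecomp}) the inner morphism becomes $(\tau_{A_2,A_{k+2}} \otimes 1_V) \circ (f_1 \otimes h)$, where $h = \sigma_k \circ \bigotimes_{i=2}^{k+1} f_i$ is a morphism $A_2 \otimes V \to A_{k+2} \otimes V$ of exactly the shape to which the inductive hypothesis applies. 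Pulling the twist out by Output Naturality and the $f_1$ factor out by Superposing, the inner trace reduces to $\tau_{A_2,A_{k+2}} \circ (f_1 \otimes \Tr^V_{A_2,A_{k+2}}(h))$, which by the IH equals $\tau_{A_2,A_{k+2}} \circ (f_1 \otimes F)$ with $F := f_{k+1} \circ \cdots \circ f_2$.

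It then remains to evaluate the outer trace $\Tr^{A_2}_{A_1,A_{k+2}}(\tau_{A_2,A_{k+2}} \circ (f_1 \otimes F))$. Here I would use naturality of $\tau$ to rewrite $\tau \circ (f_1 \otimes F) = (F \otimes f_1) \circ \tau_{A_1,A_2}$, then factor $F \otimes f_1 = (F \otimes 1_{A_2}) \circ (1_{A_2} \otimes f_1)$ and pull $F$ out through Output Naturality. A further application of the $\tau$ naturality square replaces $(1_{A_2} \otimes f_1) \circ \tau_{A_1,A_2}$ by $\tau_{A_2,A_2} \circ (f_1 \otimes 1_{A_2})$. Input Naturality extracts the remaining $f_1$, leaving the bare trace $\Tr^{A_2}_{A_2,A_2}(\tau_{A_2,A_2})$, which collapses to $1_{A_2}$ by Yanking. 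Assembling the pieces gives $F \circ f_1 = f_{k+1} \circ \cdots \circ f_1$, as required.

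The main obstacle I expect is the bookkeeping in the outer trace computation: one must carefully alternate Input/Output Naturality with the $\tau$ naturality square in order to expose the Yanking redex. The conceptual content is that Yanking absorbs exactly one arc of the cyclic permutation per inductive step, while the IH handles the remaining $k$ arcs; the Superposing axiom is what lets $f_1$ traverse the trace boundary cleanly. Everything else is routine application of the symmetric monoidal and trace axioms.
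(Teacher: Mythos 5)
Your proof is correct and follows essentially the same route as the paper's: induction on $k$ with Vanishing~I as the base case, then Vanishing~II, the decomposition~(\ref{shiftdecomp}), Input/Output Naturality, naturality of $\tau$, Superposing, the induction hypothesis, and finally Yanking. The only (harmless) difference is bookkeeping: you carry $f_1$ through the inner trace by applying Superposing with $g = f_1$ and extract it at the very end via Input Naturality, whereas the paper extracts $f_1$ early through two Input Naturality steps and then applies Superposing with $g = 1_{A_2}$.
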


\noindent The proof is relegated to the Appendix.
This lemma simultaneously generalizes Vanishing I ($k=0$) and Yanking ($k=1$, $A_1 = A_2 = A_3$, $f_1 = f_2 = 1_{A_1}$), and also the Generalized Yanking of \cite{AHS}.
The geometry of the situation is made clear by the following diagram.
\begin{center}
\psset{unit=1in,cornersize=absolute}%
\begin{pspicture}(0,-1.666667)(4.861111,0.833333)
\pscircle[fillstyle=solid,fillcolor=yellow,linecolor=blue](0.138889,0){0.144424}
\rput(0.138889,0){{\small $1$}}
\pscircle[fillstyle=solid,fillcolor=yellow,linecolor=blue](0.694444,0){0.144424}
\rput(0.694444,0){{\small $2$}}
\pscircle[fillstyle=solid,fillcolor=yellow,linecolor=blue](2.361111,0){0.144424}
\rput(2.361111,0){{\small $k$}}
\pscircle[fillstyle=solid,fillcolor=yellow,linecolor=blue](2.916667,0){0.144424}
\rput(2.916667,0){{\small ${k{+}1}$}}
\pscircle[fillstyle=solid,fillcolor=yellow,linecolor=blue](0.138889,-0.833333){0.144424}
\rput(0.138889,-0.833333){{\small $k{+}2$}}
\pscircle[fillstyle=solid,fillcolor=yellow,linecolor=blue](0.694444,-0.833333){0.144424}
\rput(0.694444,-0.833333){{\small $2$}}
\pscircle[fillstyle=solid,fillcolor=yellow,linecolor=blue](1.25,-0.833333){0.144424}
\rput(1.25,-0.833333){{\small $3$}}
\pscircle[fillstyle=solid,fillcolor=yellow,linecolor=blue](2.916667,-0.833333){0.144424}
\rput(2.916667,-0.833333){{\small $k{+}1$}}
\psline[arrowsize=0.05in 0,arrowlength=2,arrowinset=0]{->}(0.237098,-0.098209)(0.596235,-0.735124)
\psline[arrowsize=0.05in 0,arrowlength=2,arrowinset=0]{->}(0.792654,-0.098209)(1.151791,-0.735124)
\psline[arrowsize=0.05in 0,arrowlength=2,arrowinset=0]{->}(2.45932,-0.098209)(2.818457,-0.735124)
\psline[arrowsize=0.05in 0,arrowlength=2,arrowinset=0]{->}(2.818457,-0.098209)(0.237098,-0.735124)
\psline[arrowsize=0.05in 0,arrowlength=2,arrowinset=0]{->}(2.916667,-0.972222)(2.916667,-1.111111)
(3.194444,-1.111111)
(3.194444,0.277778)
(2.916667,0.277778)
(2.916667,0.138889)
\psline[arrowsize=0.05in 0,arrowlength=2,arrowinset=0]{->}(0.694444,-0.972222)(0.694444,-1.527778)
(3.75,-1.527778)
(3.75,0.694444)
(0.694444,0.694444)
(0.694444,0.138889)
\rput(3.472222,-0.416667){$\cdots$}
\rput(1.805556,0){$\cdots$}
\rput(1.805556,-0.833333){$\cdots$}
\psline[arrowsize=0.05in 0,arrowlength=2,arrowinset=0]{<-}(0.138889,0.138889)(0.138889,0.555556)
\psline[arrowsize=0.05in 0,arrowlength=2,arrowinset=0]{->}(0.138889,-0.972222)(0.138889,-1.388889)
\rput(4.166667,-0.416667){{\Huge $=$}}
\pscircle[fillstyle=solid,fillcolor=yellow,linecolor=blue](4.722222,0.416667){0.144424}
\rput(4.722222,0.416667){{\small $1$}}
\pscircle[fillstyle=solid,fillcolor=yellow,linecolor=blue](4.722222,-0.694444){0.144424}
\rput(4.722222,-0.694444){{\small ${k{+}1}$}}
\pscircle[fillstyle=solid,fillcolor=yellow,linecolor=blue](4.722222,-1.25){0.144424}
\rput(4.722222,-1.25){{\small ${k{+}2}$}}
\psline[arrowsize=0.05in 0,arrowlength=2,arrowinset=0]{<-}(4.722222,0.555556)(4.722222,0.833333)
\psline[arrowsize=0.05in 0,arrowlength=2,arrowinset=0]{->}(4.722222,0.277778)(4.722222,0)
\rput(4.722222,-0.111111){$\vdots$}
\psline[arrowsize=0.05in 0,arrowlength=2,arrowinset=0]{<-}(4.722222,-0.555556)(4.722222,-0.277778)
\psline[arrowsize=0.05in 0,arrowlength=2,arrowinset=0]{->}(4.722222,-0.833333)(4.722222,-1.111111)
\psline[arrowsize=0.05in 0,arrowlength=2,arrowinset=0]{->}(4.722222,-1.388889)(4.722222,-1.666667)
\end{pspicture}%
\end{center}

\paragraph{Proof of Proposition~\ref{traceprop}}
Note that for $k = 0$, this is just Vanishing I.
Up to conjugation by some permutation $\sigma$, we can express $\pi$ as the tensor product of its $n$-paths and loops:
\[ \pi = \sigma \circ \left(\bigotimes_{i=1}^n \Ppi (i) \otimes \bigotimes_{L \in \Loop{\pi}} L\right) \circ \sigma^{-1} . \]
Using Lemmas~\ref{tenslemma} and~\ref{permfeed}, we can express the trace of $f$ in terms of the traces of the morphisms corresponding to the $n$-paths and loops of $\pi$. The trace of each $n$-path is given by Lemma~\ref{pathlemma}.
\qed

\subsubsection{Description of $\FTr$}

The objects are as for $\FSM$.
A morphism now has the form $(S, \pi , \lambda)$, where $(\pi , \lambda )$ are as in $\FSM$, and $S$ is a \emph{multiset of loops} in $\Loop{\CC}$, \ie an element of $\mathcal{M}(\Loop{\CC})$, the free commutative monoid generated by $\Loop{\CC}$.

Note that such a morphism 
\[ (S, \pi , \lambda) : (n, A) \longrightarrow (n, B) \]
can be written as
\begin{equation}
\label{trform} 
\left(\prod_{[s_i : A \rightarrow A]_{\sim} \in S} \Tr_{\II , \II}^{A}(s_i )\right) \sdot (\pi^{-1} \circ \bigotimes_{i=1}^n \lambda_i ) \; : \; \bigotimes_i A_i \longrightarrow \bigotimes_i B_i 
\end{equation}
in the language of traced symmetric monoidal categories. This will be our closed-form description of morphisms in the free traced category. It follows from Proposition~\ref{traceprop}, together with equations (\ref{sdotident})--(\ref{sdotten}), (\ref{smoncomp}), (\ref{smontens}), (\ref{sdottr}), that this is indeed closed under the traced monoidal operations.

\noindent We define the main operations on morphisms.

\subsubsection{Composition}
\[ (T, \sigma , \mu ) \circ (S, \pi , \lambda ) = (S \munion T, \sigma \circ \pi , i \mapsto \mu_{\pi (i)} \circ \lambda_i ) \]

\subsubsection{Tensor product}
\[ (S, \pi , \lambda ) \otimes (T, \sigma , \mu ) = (S \munion T, \pi \ptensor \sigma , [\lambda , \mu]) \]

\subsubsection{Trace}
\[ \Tr_{n,n}^m (S, \pi , \lambda ) = (S \munion T, \ppi, \mu ) \]
where
\[ \begin{array}{ll}
T = \mlb [\lambda_{\pi^l (j)} \circ \cdots \circ \lambda_j ]_{\sim} \mid \pi^{l+1}(j) = j \in \LL (\pi ) \mrb , \\
\mu : i \mapsto \lambda_{\ppi (i)} \circ \lambda_{\pi^k (i)} \circ \cdots \circ \lambda_i .  
\end{array}
\]

\noindent Note that $F_{\Tr}(\CC )(\II , \II ) = \mathcal{M}(\Loop{\CC})$. 
Also,
\[ F_{\Tr}(\mathbf{1}) = (\coprod_{n \in \Nat} S(n)) \times (\Nat , {+}, 0) . \]
That is, the objects in this free category are the natural numbers; a morphism is a pair $(\pi , n)$, where $\pi$ is a permutation, and $n$ is a natural number counting the number of loops. 

\subsection{Compact Closed Categories}
The free construction for compact closed categories was characterized in the pioneering paper by Kelly and Laplaza \cite{KL80}. Their construction is rather complex. Even when simplified to the strict monoidal case, several aspects of the construction are bundled in together, and it can be hard to spot what is going one. (For example, the path construction we gave for the trace in the previous section is implicit in their paper --- but not easy to spot!). We are now in a good position to disentangle and clarify their construction. Indeed, we have already explictly constructed $F_{\Tr}(\CC )$, and there is the $\GG$ or Int construction of Joyal, Street and Verity \cite{JSV}\footnote{Prefigured in \cite{AJ92}, and also in some unpublished lectures of Martin Hyland \cite{Hypcomm}.}, which is developed in the symmetric monoidal context with connections to Computer Science issues and the Geometry of Interaction in \cite{Abr96}. This construction gives the free compact closed category generated by a traced monoidal category. Thus we can recover the Kelly-Laplaza construction as the composition of these two adjunctions:
\[\begin{diagram}
\mathbf{Cat} && \pile{\rTo^{F_{\Tr}}\\ \bot \\ \lTo_{U_{\Tr}}} && \Tr{-} \mathbf{Cat}  && \pile{\rTo^{\GG}\\ \bot \\ \lTo_{U}} && \CCl{-} \mathbf{Cat} \\
\end{diagram}
\]
Adjoints compose, so
$F_{\CCl}(\CC ) = \mathcal{G} \circ F_{\Tr}(\CC )$.
This factorization allows us to `rationally reconstruct' the Kelly-Laplaza construction.

The main notion which has to be added to those already present in $F_{\Tr}(\CC )$ is that of \emph{polarity}. The ability to distincguish between positive and negative occurrences of a variable will allow us to transpose variables from inputs to outputs, or vice versa. This possibility of transposing variables means that we no longer have the simple situation that morphisms must be between lists of generating objects of the same length. However, note that in a compact closed category, $(A \otimes B)^* \simeq A^* \otimes B^*$, so any object constructed from generating objects by tensor product and duality will be isomorphic to one of the form
\[ \bigotimes_i A_i \otimes \bigotimes_j B^*_j . \]
Moreover, any morphism
\begin{equation}
\label{ccform}
f \; : \; \bigotimes_i A_i \otimes \bigotimes_j B^*_j \; \longrightarrow \;
\bigotimes_k C_k \otimes \bigotimes_l D^*_l
\end{equation}
will, after transposing the negative objects, be in biunique correspondence with one of the form
\begin{equation}
\label{transform}
\bigotimes_i A_i \otimes \bigotimes_l D_l \;  \longrightarrow \;
\bigotimes_k C_k \otimes \bigotimes_j B_j .
\end{equation}
A key observation is that \emph{in the free category, this transposed map~(\ref{transform}) will again be of the closed form~(\ref{trform})} which characterizes morphisms in $F_{\Tr}(\CC )$, as we saw in the previous section. From this, the construction of $F_{\CCl}(\CC )$ will follow directly.

\subsubsection{Objects}
The objects in $\FCC$ are, following the $\GG$ construction applied to $\FTr$, pairs of objects of $\FTr$, hence of the form $(n, m, A^{+}, A^{-})$, where
\[ A^{+} : [n] \longrightarrow \Ob{\CC} \qquad A^{-} : [m]  \longrightarrow \Ob{\CC} . \]
Such an object can be read as the tensor product
\[ \bigotimes_{i=1}^n A^{+}_i \otimes \bigotimes_{j=1}^m (A^{-}_{j})^* . \]
This is equivalent to the Kelly-Laplaza notion of signed set, under which objects have the form $(n, A, \sgn )$, where $\sgn : [n] \rightarrow \{ {+}, {-}\}$.

\subsubsection{Operations on objects}
The tensor product is defined componentwise on the positive and negative components. Formally: 
\[ (n,m, A^{+}, A^{-}) \otimes (p, q, B^{+}, B^{-}) = (n+p, m+q, [A^{+},B^{+}], [A^{-}, B^{-}]) . \]
The duality simply interchanges positive and negative components:
\[ (n, m, A^{+}, A^{-})^* = (m, n, A^{-}, A^{+}) . \]
Note that the duality is involutive, and distributes through tensor:
\[ A^{**} = A, \qquad (A \otimes B)^* = A^* \otimes B^* . \]

\subsubsection{Morphisms}
A morphism has the form
\[ (S, \pi , \lambda ) : (n, m, A^{+}, A^{-}) \longrightarrow (p, q, B^{+}, B^{-}) \]
where we require $n+q = k = m+p$, $\pi \in S(k)$, and $\lambda : [k] \rightarrow \Mor{\CC}$, such that
\[ \lambda_i : [A^{+}, B^{-}]_i \longrightarrow [B^{+}, A^{-}]_{\pi (i)} .  \]
$S$ is a multiset of loops, just as in $\FTr$. Note that $(S, \lambda , \pi )$ can indeed be seen as a morphism in $\FTr$ in the transposed form (\ref{transform}), as discussed previously.

We now describe the compact closed operations on morphisms.

\subsubsection{Composition}
Composition of a morphism $f : A \rightarrow B$ with a morphism $g : B \rightarrow B$ is given by feeding `outputs' by $f$ from the positive component of $B$ as inputs to $g$ (since for $g$, $B$ occurs negatively, and hence the positive and negative components are interchanged); and symmetrically, feeding the $g$ outputs from the negative components of $B$ as inputs to $f$.
This symmetry allows the strong form of duality present in compact closed categories to be interpreted in a very direct and natural fashion.

This general prescription is elegantly captured algebraically in terms of the trace, which co-operates with the duality to allow symmetric interaction between the two morphisms which are being composed. This is illustrated by the following diagram, which first appeared in \cite{AJ92}:

\begin{center}
\psset{unit=1in,cornersize=absolute,dimen=middle}%
\begin{pspicture}(0,-1)(3.333333,1)
\psframe[fillstyle=solid,fillcolor=yellow,linecolor=blue](0,-0.333333)(1,0.333333)
\rput(0.5,0){$f$}
\psframe[fillstyle=solid,fillcolor=yellow,linecolor=blue](2.333333,-0.333333)(3.333333,0.333333)
\rput(2.833333,0){$g$}
\psline[arrowsize=0.05in 0,arrowlength=2,arrowinset=0]{<-}(0.25,0.333333)(0.25,1)
\psline[arrowsize=0.05in 0,arrowlength=2,arrowinset=0]{->}(0.25,-0.333333)(0.25,-1)
\psline[arrowsize=0.05in 0,arrowlength=2,arrowinset=0]{<-}(3.083333,0.333333)(3.083333,1)
\psline[arrowsize=0.05in 0,arrowlength=2,arrowinset=0]{->}(3.083333,-0.333333)(3.083333,-1)
\psbezier[arrowsize=0.05in 0,arrowlength=2,arrowinset=0]{->}(0.75,-0.333333)(0.75,-0.444444)(0.75,-0.555556)(0.75,-0.666667)
(0.75,-0.888889)(0.805556,-1)(0.916667,-1)
(1.027778,-1)(1.277778,-0.666667)(1.666667,0)
(2.055556,0.666667)(2.305556,1)(2.416667,1)
(2.527778,1)(2.583333,0.888889)(2.583333,0.666667)
(2.583333,0.555556)(2.583333,0.444444)(2.583333,0.333333)
\psbezier[arrowsize=0.05in 0,arrowlength=2,arrowinset=0]{->}(2.583333,-0.333333)(2.583333,-0.444444)(2.583333,-0.555556)(2.583333,-0.666667)
(2.583333,-0.888889)(2.527778,-1)(2.416667,-1)
(2.305556,-1)(2.055556,-0.666667)(1.666667,0)
(1.277778,0.666667)(1.027778,1)(0.916667,1)
(0.805556,1)(0.75,0.888889)(0.75,0.666667)
(0.75,0.555556)(0.75,0.444444)(0.75,0.333333)
\rput(0.116667,0.866667){$A^+$}
\rput(0.116667,-0.866667){$A^-$}
\rput(3.25,0.866667){$C^-$}
\rput(3.25,-0.866667){$C^+$}
\rput(0.916667,0.866667){$B^-$}
\rput(0.916667,-0.866667){$B^+$}
\rput(2.416667,0.866667){$B^+$}
\rput(2.416667,-0.866667){$B^-$}
\end{pspicture}%
\end{center}

A concrete account for $\FCC$ follows directly from our description of the trace in $\FTr$:
chase paths, and compose (in $\CC$) the morphisms labelling the paths  to get the labels. In general, loops will be formed, and must be added to the multiset.
Formally, given arrows
\[ f : A \longrightarrow B, \qquad \qquad g : B \longrightarrow C \]
where
\[ \begin{array}{lcl}
f & = & (S, \pi , \lambda ) : (n, m, A^{+}, A^{-}) \longrightarrow (p, q, B^{+}, B^{-}) \\ 
g & = & (T, \sigma, \mu) : (p, q, B^{+}, B^{-}) \longrightarrow (r, s, C^{+}, C^{-}) 
\end{array}
\]
we form the composition
\[ (S \munion T \munion U, \; \EX{\pi}{\sigma} ,\; \rho ) : (n, m, A^{+}, A^{-}) \longrightarrow (r, s, C^{+}, C^{-}) . \]
There is an algebraic description of the permutation component $\EX{\pi}{\sigma}$, which can be derived from our algebraic description of $\ppi$ in the previous section. 
In the same manner as we did there, we can decompose each of $\pi$ and $\sigma$ into four components, which we write as matrices:
\[ \pi = \begin{pmatrix} \pi_{A^{+}A^{-}} \quad & \pi_{A^{+}B^{+}} \\
\pi_{B^{-}A^{-}} \quad & \pi_{B^{-}B^{+}}
\end{pmatrix}
\qquad \qquad
\sigma = \begin{pmatrix} \sigma_{A^{+}A^{-}} \quad & \sigma_{A^{+}B^{+}} \\
\sigma_{B^{-}A^{-}} \quad & \sigma_{B^{-}B^{+}}
\end{pmatrix}
\]
Now if we write 
\[ \EX{\pi}{\sigma} = \theta = \begin{pmatrix} \theta_{A^{+}A^{-}} \quad & \theta_{A^{+}C^{+}} \\
\theta_{C^{-}A^{-}} \quad &     \theta_{C^{-}C^{+}}
\end{pmatrix}
\]
then we can define
\[ \begin{array}{lcl}
\theta_{A^{+}A^{-}} & \;\; = \;\; & 
\pi_{A^{+}A^{-}} \; \cup \;\; \pi_{A^{+}B^{+}} ; \sigma_{B^{+}B^{-}} ; (\pi_{B^{-}B^{+}} ; \sigma_{B^{+}B^{-}})^{\ast} ; \pi_{B^{-}A^{-}} \\
\theta_{A^{+}C^{+}} & \;\; = \;\; & \pi_{A^{+}B^{+}}; (\sigma_{B^{+}B^{-}} ; \pi_{B^{-}B^{+}})^{\ast} ; \sigma_{B^{+}C^{+}} \\
\theta_{C^{-}A^{-}} & \;\; = \;\; & \sigma_{C^{-}B^{-}}; (\pi_{B^{-}B^{+}} ; \sigma_{B^{+}B^{-}})^{\ast} ; \pi_{B^{-}A^{-}} \\
\theta_{C^{-}C^{+}} & \;\; = \;\; & \sigma_{C^{-}C^{+}} \; \cup \;\; \sigma_{C^{-}B^{-}} ; \pi_{B^{-}B^{+}} ; (\sigma_{B^{+}B^{-}} ; \pi_{B^{-}B^{+}})^{\ast} ; \sigma_{B^{+}C^{+}} .
\end{array}
\]
This is essentially the 
`Execution formula' \cite{Gir89} --- see also \cite{JSV} and \cite{Abr96}; it appears implicitly in \cite{KL80} as a coequaliser.

Similarly, we can characterize the loops formed by composing $\pi$ and $\sigma$, $\LL (\pi , \sigma )$, by
\[ \begin{array}{ccl}
j & \;\in\; & \bigcup \LL (\pi , \sigma ) \;\;  \Longleftrightarrow \\ 
\langle j, j \rangle & \;\in\; & ((\pi_{B^{-}B^{+}} ; \sigma_{B^{+}B^{-}})^{\ast} \, \cap \, \id_{B^{-}}) \; \cup \;
((\sigma_{B^{+}B^{-}} ; \pi_{B^{-}B^{+}})^{\ast} \, \cap \, \id_{B^{+}}) . 
\end{array}
\]

The labelling function $\rho$ simply labels $\EX{\pi}{\sigma} : i \mapsto j$ with the arrow in $\CC$ formed by composing the arrows labelling the arcs in the path from $i$ to $j$ described by the above `flow matrix'. Similarly, $U$ is the multiset of loops labelling the cycles in $\LL (\pi , \sigma )$.

One can give algebraic descriptions of $\rho$ and $U$ by reformulating $\lambda$ and $\mu$ as \emph{graph homomorphisms} into (the underlying graph of) $\CC$. One can then form a homomorphism $\nu : G \rightarrow U_{\Grph}\CC$ from a combined graph $G$, which gives an `intensional description' of the composition. This combined graph will comprise the disjoint union of the graphs corresponding to the two arrows being composed, together with explicit \emph{feedback arcs}, labelled by $\nu$ with identity arrows in $\CC$.
One then considers the \emph{path category} $G^*$ freely generated from this graph \cite[2.VII]{Mac}; the above flow expressions for $\EX{\pi}{\sigma}$ yield a description of the paths in this category when relational composition is reinterpreted as concatenation of paths. We can then read off $\rho$ and $U$ from the unique functorial extension of  $\nu$ to this path category.

\subsubsection{Tensor Product}

This is defined componentwise as in $\FTr$, with appropriate permutation of indices in order to align positive and negative components correctly.

\subsubsection{Units and Counits}
Firstly, we describe the identity morphisms explicitly:
\[ \id = (\varnothing , \id_{[n+m]}, i \mapsto 1_{[A,B]_i}): (n, m, A^{+}, A^{-}) \longrightarrow (n, m, A^{+}, A^{-}) . \]
We join each dot in the input to the corresponding one in the output, and label it with the appropriate identity arrow.

Now consider the unit $\eta_A : \II \rightarrow A^* \otimes A$.
Once we have unpacked the definition of what comprises an arrow of this type, we see that we can make exactly the same definition as for the identity!
The unit is just \emph{the right transpose of the identity}. Similarly, the counit is \emph{the left transpose of the identity}.

Thus identities, units and counits are essentially all the same, except that the polarities allow variables to be transposed freely between the domain and codomain.

\[ \begin{array}{rc}
\text{Identity:} & \qquad \qquad
\begin{diagram}
\bullet^+ & \pile{\rTo^1 \\ \vdash} & \bullet^+ \\
\end{diagram}
\\
\text{Unit:} & \qquad \qquad
\begin{diagram}
\vdash &  \bullet^- & \rTo^1 & \bullet^+ \\
\end{diagram}
\\
\text{Counit:} & \qquad \qquad
\begin{diagram}
\bullet^+ & \rTo^1 & \bullet^- &  \vdash \\
\end{diagram}
\end{array}
\]

\subsection{Strongly Compact Closed Categories}
We now wish to analyze the new notion of strongly compact closed category in the same style as the previous constructions. Fortunately, there is a simple observation which makes this quite transparent. \emph{Provided that the category we begin with is already equipped with an involution} (but no other structure), then this involution `lifts' through all our constructions, yielding the free `dagger version' (in the sense of \cite{Sel05}) of each of our constructions. In particular, our construction of $\FCC$ in the previous section in fact gives rise to the free strongly compact closed category.

More precisely, we shall describe an adjunction
\begin{diagram}
\mathbf{InvCat} && \pile{\rTo^{F_SCC}\\ \bot \\ \lTo_{U_SCC}} && \mathbf{SCC{-}Cat} \\
\end{diagram}
where
$\mathbf{InvCat}$ is the category of categories with a specified \emph{involution}, \ie an identity on objects, contravariant, involutive functor; and functors preserving the involution.

Our previous construction of $\FCC$ lifts directly to this setting. The main point is that we can define an involution $()^{\dagger}$ on $\FCC$, under the assumption that we are given a primitive $()^{\dagger}$ on the generating category $\CC$. The dagger on $\FCC$ will endow it with the structure of a strongly compact closed category (for which the compact closed part will coincide with that already described for $\FCC$).

Given 
\[ (S , \pi , \lambda) : (n, m, A^{+}, A^{-}) \longrightarrow (p, q, B^{+}, B^{-}) , \]
we can define
\[ (S , \pi , \lambda)^{\dagger} = (\mlb [s^{\dagger}]_{\sim} \mid [s]_{\sim} \in S \mrb , \pi^{-1}, j \mapsto \lambda_{\pi^{-1}(j)}^{\dagger}) . \]
In short, we reverse direction on the arrows connecting the dots (including reversing the direction of loops), and label the reversed arrows with the reversals of the original labels.
This contrasts with the dual $f^*$, which by the way types are interpreted in this free situation, is essentially \emph{the same combinatorial object} as $f$, but with a different `marking' by polarities  --- there are no reversals involved.
Thus, if we had a labelling morphism
\[ \begin{diagram} \lambda_i = [A^{+}, B^{-}]_i & \rTo^{f_i} & [B^{+}, A^{-}]_{\pi (i) = j} 
\end{diagram}
\]
then we will get
\begin{diagram}
(\lambda^{\dagger})_j = [B^{+}, A^{-}]_j & \rTo^{f_i^{\dagger}} & [A^{+}, B^{-}]_{\pi^{-1}(j) = i} .
\end{diagram}
It is easy to see that $\eta_A = \epsilon_A^{\dagger}$, so this is compatible with our previous construction of $\FCC$.

\subsection{Parameterizing on the Monoid}
So far, the scalars have arisen intrinsically from the loops in the generating category $\CC$. However, we may wish for various reasons to be able to `glue in' a preferred multiplicative monoid of scalars into our traced, compact closed,or strongly compact closed category, For example, we may wish to consider only a few generating morphisms, but to take the complex numbers $\mathbb{C}$ as scalars. We will present a construction which accomodates this, as a simple refinement of the previous ones. There are versions of this construction for each of the traced, compact closed, and strongly compact closed cases: we shall only discuss the last of these.

Firstly, we note that there is a functor
\[ \LL : \InvCat \longrightarrow \InvSet \]
which sends a category to its sets of loops. The dagger defines an involution on the set of loops. Involution-preserving functors induce involution-preserving functions on the loops.

Now let $\InvCMon$ be the category of commutative monoids with involution, and involution-preserving homomorphisms. There is an evident forgetful functor
$U_{\InvCMon} \longrightarrow \InvSet$.
We can form the comma category $(\LL \downarrow U_{\InvCMon})$, whose objects are of the form $(\CC , \varphi , M)$, where $\varphi$ is an involution-preserving map from $\Loop{\CC}$ to the underlying set of $M$. Here we can think of $M$ as the prescribed monoid of scalars, and $\varphi$ as specifying how to evaluate loops from $\CC$ in this monoid.

\noindent There is a forgetful functor
$U_{\VV} : \SCC{-}\mathbf{Cat} \longrightarrow \VV$
\[ U_{\VV} : \CC \longmapsto (U_{\SCC} (\CC ), \; f : A \rightarrow A  \longmapsto \Tr_{\II , \II}^A (f), \; \CC (\II , \II )) . \]
Our task is to construct an adjunction
\begin{diagram}
\mathbf{\VV} && \pile{\rTo^{F_{\VV}}\\ \bot \\ \lTo_{U_{\VV}}} && \mathbf{SCC{-}Cat} \\
\end{diagram}
which builds the free SCC on a category \emph{with prescribed scalars}.
This is a simple variation on our previous construction of $\FSCC$,
which essentially acts by composition with the loop evaluation function $\varphi$ on $\FSCC$.
We use the prescribed monoid $M$ in place of $\MM (\Loop{\CC})$.
Thus a morphism in $\FV$ will have the form $(m, \pi , \lambda )$, where $m \in M$. Multiset union is replaced by the monoid operation of $M$. The action of the dagger functor on elements of $M$ is by the given involution on $M$. When loops in $\CC$ arise in forming compositions in the free category, they are evaluated in $M$ using the function $\varphi$.

The monoid of scalars in this free category will of course be $M$.

\newpage
\section*{Appendix}
The following equational proofs involve some long typed formulas. To aid in readability, we have annotated each equational step (reading down the page) by $\underline{\text{\emph{underlining}}}$ each redex, and $\overline{\text{\emph{overlining}}}$ the corresponding  contractum.

\subsubsection{Proof of Lemma \ref{tenslemma}}
\begin{proof}
\[
\begin{array}{ll}
& \underline{\Tr_{A, B}^U (f) \otimes \Tr_{C, D}^V (g)} \\
=  & \text{$\{$Superposing$\}$} \\
& \overline{\Tr_{A \otimes C, B \otimes D}^{V} ( \underline{\Tr_{A, B}^U (f) \otimes g})} \\
=  & \text{$\{$Naturality of $\tau$$\}$} \\
& \Tr_{A \otimes C, B \otimes D}^{V} (\overline{\tau_{D \otimes V, B} \circ (\underline{g \otimes  \Tr_{A, B}^U (f)}) \circ \tau_{A, C \otimes V}}) \\
=  & \text{$\{$Superposing$\}$} \\
& \Tr_{A \otimes C, B \otimes D}^{V} (\underline{\tau_{D \otimes V, B}} \circ \overline{\Tr_{C \otimes V \otimes A, D \otimes V \otimes B}^U (g \otimes  f)} \circ \underline{\tau_{A, C \otimes V}}) \\
=  & \text{$\{$Input/Output Naturality$\}$} \\
& \Tr_{A \otimes C, B \otimes D}^{V} ( \Tr_{A \otimes C \otimes V, B \otimes D \otimes V}^{U}(\underline{\overline{(\tau_{D \otimes V, B} \otimes 1_{U})}  \circ (g \otimes f) \circ \overline{(\tau_{A, C \otimes V} \otimes 1_U )}})) \\
=  & \text{$\{$$\SMon$ Coherence$\}$} \\
& \underline{\Tr_{A \otimes C, B \otimes D}^{V}} (\underline{\Tr_{A \otimes C \otimes V, B \otimes D \otimes V}^{U}}(\overline{(1_A \otimes \tau_{U, D \otimes V}) \circ (f \otimes g) \circ (1_{A} \otimes \tau_{C \otimes V, U})})) \\
= & \text{$\{$Vanishing II$\}$} \\
& \overline{\Tr_{A \otimes C, B \otimes D}^{V \otimes U}} ( (1_A \otimes \tau_{U, D \otimes V}) \circ (f \otimes g) \circ (1_{A} \otimes \tau_{C \otimes V, U})) . 
\end{array}
\]
\end{proof}

\newpage
\subsubsection{Proof of Lemma \ref{pathlemma}}
\begin{proof}
Note that for $k = 0$, this is just Vanishing I. We now reason inductively when $k > 0$.
\[
\begin{array}{ll}
& \underline{\Tr_{A_{1}, A_{k+2}}^U}  (\sigma_{k+1} \circ \bigotimes_{i=1}^{k+1}f_i ) \\
= & \text{$\{$Vanishing II$\}$} \\
& \overline{\Tr_{A_{1}, A_{k+2}}^{A_2}} ( \overline{\Tr_{A_{1} \otimes A_2 , A_{k+2}\otimes A_2 }^V} (\underline{\sigma_{k+1} \circ \bigotimes_{i=1}^{k+1}f_i} )) \\
= & \text{$\{$(\ref{shiftdecomp})$\}$} \\
& \Tr_{A_{1}, A_{k+2}}^{A_2} ( \Tr_{A_{1} \otimes A_2 , A_{k+2}\otimes A_2 }^V ( \overline{\underline{(\tau \otimes 1)} \circ (1 \otimes \sigma_k ) \circ \bigotimes_{i=1}^{k+1}f_i} )) \\
= &  \text{$\{$Naturality in $A_{k+2} \otimes A_2$$\}$}  \\
&  \Tr_{A_{1}, A_{k+2}}^{A_2} ( \overline{\tau} \circ \Tr_{A_{1} \otimes A_2 , A_{2}\otimes A_{k+2} }^V ( (1 \otimes \sigma_k ) \circ \underline{\bigotimes_{i=1}^{k+1}f_i} )) \\
= & \text{$\{$Bifunctoriality of $\otimes$$\}$}   \\
& \Tr_{A_{1}, A_{k+2}}^{A_2} ( \tau \circ \Tr_{A_{1} \otimes A_2 , A_{2}\otimes A_{k+2} }^V ( (1 \otimes \sigma_k ) \circ \overline{(1 \otimes \bigotimes_{i=2}^{k+1}f_i ) \circ (\underline{f_1 \otimes 1_{U}})})) \\
= & \text{$\{$Naturality in $A_{1} \otimes A_2$$\}$}  \\
& \Tr_{A_{1}, A_{k+2}}^{A_2} ( \tau \circ \Tr_{A_{2} \otimes A_2 , A_{2}\otimes A_{k+2} }^V ( (1 \otimes \sigma_k ) \circ (1 \otimes \bigotimes_{i=2}^{k+1}f_i )) \circ (\underline{\overline{f_1 \otimes 1_{A_2}}})) \\
= & \text{$\{$Naturality in $A_{1}$$\}$}  \\
& \Tr_{A_{2}, A_{k+2}}^{A_2} ( \tau \circ \Tr_{A_{2} \otimes A_2 , A_{2}\otimes A_{k+2} }^V ( \underline{(1 \otimes \sigma_k ) \circ (1 \otimes \bigotimes_{i=2}^{k+1}f_i )}) ) \circ \overline{f_1} \\
= & \text{$\{$Bifunctoriality of $\otimes$$\}$}  \\
& \Tr_{A_{2}, A_{k+2}}^{A_2} ( \tau \circ \underline{\Tr_{A_{2} \otimes A_2 , A_{2}\otimes A_{k+2} }^V ( \overline{1 \otimes (\sigma_k  \circ \bigotimes_{i=2}^{k+1}f_i )}}) ) \circ f_1 \\
= & \text{$\{$Superposing$\}$}  \\
& \Tr_{A_{2}, A_{k+2}}^{A_2} ( \tau \circ (\overline{1 \otimes \underline{\Tr_{A_2 , A_{k+2} }^V (  \sigma_k  \circ \bigotimes_{i=2}^{k+1}f_i )}})) \circ f_1
\\
= &  \text{$\{$Induction hypothesis$\}$} \\
& \Tr_{A_{2}, A_{k+2}}^{A_2} ( \underline{\tau \circ (1 \otimes (\overline{f_{k+1} \circ \cdots \circ f_2} ))} ) \circ f_1 \\
= & \text{$\{$Naturality of $\tau$$\}$}  \\
& \Tr_{A_{2}, A_{k+2}}^{A_2} ( \overline{(\underline{(f_{k+1} \circ \cdots \circ f_2 ) \otimes 1}) \circ \tau} ) \circ f_1 \\
= & \text{$\{$Naturality in $A_{k+2}$$\}$}  \\
& \overline{(f_{k+1} \circ \cdots \circ f_2 )} \circ \underline{\Tr_{A_{2}, A_{2}}^{A_2} (  \tau )} \circ f_1 \\
= & \text{$\{$Yanking$\}$}  \\
& (f_{k+1} \circ \cdots \circ f_2 ) \circ \overline{1_{A_{2}}} \circ f_1    \\
= &  f_{k+1} \circ \cdots \circ f_2 \circ f_1 . 
\end{array}
\]
\end{proof}

\end{document}